\newcommand{\R}{\mathbb{R}}
\newcommand{\N}{\mathbb{N}}
\newcommand{\<}{\leqslant}
\renewcommand{\>}{\geqslant}
\newcommand{\Int}{\int\limits}
\newcommand{\Sum}{\sum\limits}
\newcommand{\lmin}{\lambda_{\textup{min}}}
\newcommand{\PC}{PC^1\bigl([-h,0],\R^n\bigr)}
\newcommand{\C}{C_{\infty}\bigl([-h,0],\R^n\bigr)}
\newcommand{\ph}{\varphi}
\newcommand{\dd}{\mathrm{d}}
\newcommand{\eps}{\varepsilon}
\newtheorem{theorem}{Theorem}
\newtheorem{lemma}{Lemma}
\newtheorem{proposition}{Proposition}
\newtheorem{remark}{Remark}
\newtheorem{corollary}{Corollary}
\newtheorem{definition}{Definition}
\newtheorem{assumption}{Assumption}
\newtheorem{property}{Property}
\journal{Journal of the Franklin Institute}
\begin{document}

\begin{frontmatter}

\title{Necessary and sufficient stability conditions for neutral-type delay systems: Polynomial approximations}


\author[1]{Gerson Portilla\corref{cor1}%
\fnref{fn1}}
\ead{gportilla@ctrl.cinvestav.mx}
\author[2]{Mathieu Bajodek}
\ead{mathieu.bajodek@cpe.fr}
\author[1]{Sabine Mondié\fnref{fn1}}
\ead{smondie@ctrl.cinvestav.mx}

\cortext[cor1]{Corresponding author}
\fntext[fn1]{The work of the first and third authors was supported by project CONACYT A1-S-24796, Mexico.}
\address[1]{Department of Automatic Control,
        CINVESTAV-IPN, 07360 Mexico D.F., Mexico}
\address[2]{Laboratoire d'Automatique, de Génie des Procédés et de Génie Pharmaceutique (LAGEPP), CPE Lyon, UCBL, Campus Lyontech-La Doua, 69100 Villeurbanne, France}

\begin{abstract}
A new necessary and sufficient stability test in a tractable number of operations for linear neutral-type delay systems is introduced. It is developed in the Lyapunov-Krasovskii framework via functionals with prescribed derivative. The necessary conditions, which stem from substituting any polynomial approximation of the functional argument, reduce to a quadratic form of monomials whose matrix is independent of the coefficients of the approximation under consideration. In the particular case of Chebyshev polynomials, the functional approximation error is quantified, leading to an estimate of the order of approximation such that the positive semi-definiteness of the functional is verified. Some examples illustrate the obtained results.
\end{abstract}

\begin{keyword}
Neutral type delay systems, stability criterion, delay Lyapunov matrix.
\end{keyword}

\end{frontmatter}


\section{Introduction}
In the Lyapunov-Krasovskii framework, stability theorems for neutral-type systems were introduced by Krasovskii in \cite{krasovskii1965}. In the monograph by Kharitonov \cite{kharitonov2013time}, functionals with prescribed derivative for linear neutral time-delay systems are constructed. They rely on the so-called delay Lyapunov matrix that, in analogy with delay-free linear systems, is obtained as the solution of a Lyapunov equation, which is now a set of three properties called symmetry, dynamic and algebraic. These functionals with prescribed derivative have been exploited to present sufficient \cite{alexandrova2019stability}, necessary and sufficient stability \cite{gomez2021necessary,portilla2023} results for neutral type delay systems. The above-mentioned results allowed the presentation of stability and instability theorems on a special set of bounded functional arguments as in \cite{alexandrova2019stability}, which turned out to be crucial for achieving tractable sufficiency results. 

Necessary conditions follow from the substitution of the functional argument. In particular, substituting the functional argument by an approximation in terms of the system's fundamental matrix reveals an elegant necessary condition in terms of the delay Lyapunov matrix \cite{gomez2021necessary}. 

For assessing the sufficiency of these conditions, one must find the order of approximation of an arbitrary initial function that satisfies the positivity condition of the functional for any initial approximated function. The fundamental matrix approximation is poor, thus it leads to very large approximation orders guaranteeing sufficiency. To remedy this issue, authors have proposed, to the cost of losing the delay Lyapunov matrix formulated criteria, better approximations of the functional argument, for example, the piece-wise linear approximations \cite{irina2022criterion} and Legendre polynomials in \cite{bajodek2022necessary} proposed for the retarded type case. The orthogonality, speed of convergence, and accuracy of Legendre polynomial approximations allowed achieving similar or tighter approximation orders piece-wise linear approximations \cite{irina2022criterion}.  In these contributions, after calculating the delay Lyapunov matrix, its derivative, and substitution of the approximated argument, the computation of integrals leads to an approximation of the functional, which, combined with sufficiency stability theorems, gives tractable conditions in the form of linear matrix inequalities \cite{fridman2014introduction,seuret2015hierarchy}.

Here, considering that it is well-known that polynomial approximations perform well and that any functional argument allows presenting necessary stability conditions for linear neutral time-delay systems, we propose an approach encompassing all classes of polynomial approximations. The result relies on the fact that any polynomial approximation can be reorganized as the product of a monomial vector multiplied by a vector of appropriate coefficients. The positive semi-definiteness of the resulting quadratic form delivers a necessary stability test for general polynomial approximation-based stability results. It is characterized by integrals of the delay Lyapunov matrix multiplied by monomials, which can be computed by a recursive method following the one introduced in \cite{irina2022criterion}.

For sufficiency, as the argument approximation order influence on the functional error must be quantified, one must restrict the analysis to a particular approximation method. Among polynomial approximations~\cite{phillips2003interpolation}, Taylor expansion, Bessel series, Lagrange interpolations, Padé approximants, and orthogonal polynomial approximations, such as  Chebyshev and Legendre, are available. We choose Chebyshev polynomials because, while they share the orthogonal property with Legendre polynomials, they outperform them with the super-geometric convergence of their approximation coefficients, ensuring a fast error convergence rate and simplification of the computations \cite{boyd2001chebyshev}. Some examples allow evidencing the reduced approximation orders, which are compared with previous results in the literature for neutral-type systems.

This paper is organized as follows. In Section~\ref{sec:preliminaries}, some preliminaries on linear neutral time-delay systems, functionals with prescribed derivative, and an instrumental instability result are introduced. Also, a monomial representation for any polynomial approximation and convergence properties of Chebyshev polynomials are presented.  In Section~\ref{sec:main_results}, a new necessary stability test resulting from general polynomial approximations and a stability criterion in a tractable number of operations is obtained for the case of Chebyshev polynomials. At the end, the technical issue of computing the integrals in terms of the delay Lyapunov matrix is tackled via a new recursive method described in Section~\ref{sec:rec_method} and some examples in Section~\ref{sec:examples} validate the results.

\textbf{Notation: } We consider the spaces of $\R^n$-valued piecewise continuously differentiable and smooth functions on $[-h,0]$, which are denoted by $\PC$ and $\C,$ respectively. They are equipped with the uniform norm 
\begin{equation*}
     \|\varphi\|_h=\sup_{\theta\in[-h,0]}\|\varphi(\theta)\|,
\end{equation*}
where $\|\cdot\|$ stands for the Euclidean norm for vectors and the spectral norm for matrices. $\Re(s)$ denotes the real part of a complex value $s$; $\lmin(W)$ is the smallest eigenvalue of a square matrix $W$; notation $k=\overline{n_1,n_2}$, where $n_1,~n_2\in \mathbb{Z}$, $n_1<n_2$, means that $k$ is an integer between $n_1$ and $n_2$; $I_n$ stands for the $n\times n$ identity matrix; $\mathrm{He}(M)$ means $M+M^\top$;  $\lceil \cdot \rceil$ denotes the ceiling function; $\{A_{ij}\}_{i,j=1}^r$ denotes a square block matrix, where $A_{ij}\in\R^{n\times n},~i,j=\overline{1,r}$, is the block in the $i$-th row and the $j$-th column; $\textup{vec}(X)$ means the column vectorization of a matrix $X$; Following \cite{kharitonov2013time}, $A\otimes B$ stands for the Kronecker product, namely,
\begin{equation*}
    A\otimes B = \begin{pmatrix}
    b_{11}A & b_{21}A & \cdots & b_{n1}A\\
    b_{12}A & b_{22}A & \cdots & b_{n2}A\\
    \vdots &\vdots & \ddots & \vdots\\
    b_{1n}A & b_{2n}A & \cdots & b_{nn}A\\
    \end{pmatrix},
\end{equation*}
where $B=\{b_{ij}\}_{i,j=1}^n,~b_{ij}\in\R$. Notice that, with this definition, $\textup{vec}(AXB)=(A\otimes B)\textup{vec}(X)$. For a symmetric matrix $\Lambda,$ the notation $\Lambda>0$ $(\Lambda\geq0)$ means that $\Lambda$ is a positive definite (positive semidefinite) matrix. $\mathcal{W}(z)$ denotes the Lambert function given by $\mathcal{W}:z \to y,~z\in \mathbb{R}_+,~y\in \mathbb{R}_+$, which is uniquely defined by the relation $y\textup{e}^y=z$.

\section{Preliminaries}\label{sec:preliminaries}
\subsection{Neutral-type delay systems and Lyapunov-Krasovskii functional}
Consider a neutral-type linear time-delay system of the form
\begin{equation} \label{eq:time-delay_sys}
    \frac{\mathrm{d}}{\mathrm{d}t}[x(t)-Dx(t-h)]=A_0x(t)+A_1x(t-h),
\end{equation}
where $h>0$, $A_0$, $A_1$, and $D$ are given real $n\times n$ matrices.  For $\ph\in\PC$, the solution $x(t)=x(t,\varphi)$ is a piecewise continuous function that satisfies system \eqref{eq:time-delay_sys} almost everywhere for $t\>0$, and the difference $x(t)-Dx(t-h)$ is continuous for $t\>0,$ except for possibly a countable number of points. The restriction of the solution $x(t,\ph)$ to the interval $[t-h,t],~t\>0$, is denoted by 
\begin{equation*}
    x_t(\ph):\theta \mapsto x(t+\theta,\ph),~\theta\in [-h,0].
\end{equation*}
\begin{definition}
System \eqref{eq:time-delay_sys} is exponentially stable if there exist $\gamma>0$ and $\sigma>0$ such that for any initial function $\ph\in\PC$,
\begin{equation*}
    \|x(t,\ph)\|\<\gamma e^{-\sigma t}\|\ph\|_h,~t\>0.
\end{equation*}
\end{definition}
We also introduce the following assumption on matrix $D$, which is a necessary condition for the exponential stability of system \eqref{eq:time-delay_sys}.
\begin{assumption}\label{as:schur}
Matrix $D$ is a Schur stable matrix, it satisfies $\|D\|<1$.
\end{assumption}

The delay Lyapunov matrix definition and a stability result on a particular set of initial functions in the Lyapunov-Krasovskii framework are recalled. 
\begin{definition}\label{def:matrix_U}\cite{kharitonov2013time}
Let $W\in\R^{n\times n}$ be a positive definite matrix. The \emph{delay Lyapunov matrix}~$U\!:[-h,h]\rightarrow \R^{n\times n}$ is a continuous matrix function, which satisfies the following properties.
\begin{enumerate}
    \item Dynamic property
          \begin{equation}\label{eq:dynamic_property}
              U'(\theta)-U'(\theta-h)D=U(\theta)A_0+U(\theta-h)A_1,~ \theta\in(0,h).
          \end{equation}
    \item Symmetry property 
          \begin{equation}\label{eq:symmetry_property}
              U^\top(\theta)=U(-\theta),~ \theta\in[-h,h].
          \end{equation}
    \item Algebraic property 
          \begin{equation}\label{eq:algebraic_property}
              P-D^\top PD=-W,
          \end{equation}
          where $P=\lim\limits_{\theta\rightarrow 0}\left(U'(+\theta)-U'(-\theta)\right)$.
\end{enumerate}
\end{definition}
Observe that, for negative values of the argument, the delay Lyapunov matrix satisfies
\begin{equation}\label{eq:dynamic_pro_tau_neg}
    U'(\theta)-D^{\top}U'(\theta+h)=-A_0^{\top}U(\theta)-A_1^{\top}U(\theta+h),~ \theta\in(-h,0).
\end{equation}
The following result gives conditions for the existence and uniqueness of the delay Lyapunov matrix. Notice that it does not rely on the stability or instability of system \eqref{eq:time-delay_sys}.
\begin{theorem}\label{th:lyap_condition}\cite{kharitonov2013time}
System \eqref{eq:time-delay_sys} admits a unique Lyapunov matrix if and only if the system satisfies the Lyapunov condition, i.e., if there exists $\eps>0$ such that any two points $s_1$ and $s_2$ of the spectrum of system \eqref{eq:time-delay_sys},
\begin{equation*}
    \Lambda=\Big\{s~\Big|~\textup{det}\Big(sI-A_0-se^{-sh} D- e^{-sh} A_1 \Big)=0\Big\},
\end{equation*}
satisfy $|s_1+s_2|>\eps$.
\end{theorem}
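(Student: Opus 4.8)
The statement is the classical existence–uniqueness theorem for the delay Lyapunov matrix, so my plan is to convert the three defining properties of Definition~\ref{def:matrix_U} (which couple the values of $U$ at $\theta$ and $\theta-h$) into a \emph{delay-free} two-point boundary value problem on $[0,h]$, and then to characterize its unique solvability in terms of the spectrum $\Lambda$.

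First I would introduce the auxiliary matrix functions $Y(\theta)\eqd U(\theta)$ and $Z(\theta)\eqd U(\theta-h)$ for $\theta\in[0,h]$. The dynamic property \eqref{eq:dynamic_property} reads $Y'(\theta)-Z'(\theta)D=Y(\theta)A_0+Z(\theta)A_1$, while replacing $\theta$ by $\theta-h$ in the negative-argument relation \eqref{eq:dynamic_pro_tau_neg} yields the companion equation $Z'(\theta)-D^{\top}Y'(\theta)=-A_1^{\top}Y(\theta)-A_0^{\top}Z(\theta)$. Vectorizing with $y\eqd\textup{vec}(Y)$, $z\eqd\textup{vec}(Z)$ and using $\textup{vec}(AXB)=(A\otimes B)\textup{vec}(X)$, these collapse into $\mathcal{D}\,\xi'(\theta)=\mathcal{L}\,\xi(\theta)$, where $\xi\eqd(y^{\top},z^{\top})^{\top}$, $\mathcal{L}$ collects the blocks $I_n\otimes A_0$, $I_n\otimes A_1$, $-(A_1^{\top}\otimes I_n)$, $-(A_0^{\top}\otimes I_n)$, and the leading coefficient matrix is
\begin{equation*}
\mathcal{D}=\begin{pmatrix} I_{n^2} & -(I_n\otimes D)\\ -(D^{\top}\otimes I_n) & I_{n^2}\end{pmatrix}.
\end{equation*}
Its Schur complement, of the form $I_{n^2}-(D^{\top}\otimes D)$, has eigenvalues equal to the pairwise products of the eigenvalues of $D$; by Assumption~\ref{as:schur} these products have modulus strictly below one, so $\mathcal{D}$ is invertible and I obtain a constant-coefficient ODE $\xi'(\theta)=\mathcal{M}\,\xi(\theta)$ with $\mathcal{M}\eqd\mathcal{D}^{-1}\mathcal{L}$ and explicit solution $\xi(\theta)=e^{\mathcal{M}\theta}\xi(0)$. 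This invertibility is exactly where the neutral structure (versus the retarded case $D=0$) forces Assumption~\ref{as:schur}.

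Next I would impose the remaining properties as linear constraints on $\xi(0)$. The continuity matching $Z(h)=U(0)=Y(0)$ gives one constraint; the symmetry property \eqref{eq:symmetry_property} forces $Y(0)=U(0)$ to be symmetric (and is consistent with the companion equation); and the algebraic property \eqref{eq:algebraic_property} contributes $P-D^{\top}PD=-W$ with $P=Y'(0)-Z'(h)$, which through $\xi(\theta)=e^{\mathcal{M}\theta}\xi(0)$ is again linear in $\xi(0)$. Collecting these yields a single square system $\mathcal{B}\,\xi(0)=b_W$ for the $2n^2$ unknown entries of $\xi(0)$, where $b_W$ carries the data $\textup{vec}(-W)$ and $\mathcal{B}$ is assembled from $I$, $e^{\mathcal{M}h}$ and $\mathcal{M}$. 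A delay Lyapunov matrix exists and is unique if and only if $\mathcal{B}$ is nonsingular, i.e. the homogeneous problem ($W=0$) admits only the trivial solution.

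The main obstacle, and the heart of the argument, is to show that $\det\mathcal{B}\neq0$ is equivalent to the Lyapunov condition. I would establish this through the spectral correspondence between $\mathcal{M}$ and $\Lambda$: unvectorizing an eigenvector of $\mathcal{M}$ associated with an eigenvalue $s$ and exploiting the block structure of $\mathcal{D}^{-1}\mathcal{L}$, one shows that such an $s$ forces a root of $\det(sI-A_0-se^{-sh}D-e^{-sh}A_1)=0$ \emph{together} with a root of the reflected characteristic function at $-s$; hence the relevant modes occur precisely when both $s\in\Lambda$ and $-s\in\Lambda$. A nontrivial homogeneous solution of the boundary value problem then exists exactly when such a pair is present, that is when two spectral points satisfy $s_1+s_2=0$ --- the negation of the Lyapunov condition. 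Making this equivalence rigorous (including the generalized-eigenvector bookkeeping for repeated roots and the reduction of the boundary determinant to the separation condition on $\Lambda$) is the technical core; by contrast the reductions in the first steps are routine linear algebra once Assumption~\ref{as:schur} secures the invertibility of $\mathcal{D}$.
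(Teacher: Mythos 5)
The paper does not prove this theorem: it is imported verbatim from \cite{kharitonov2013time} as a known result, so your proposal can only be judged on its own merits. Its preparatory steps are sound: rewriting the dynamic properties \eqref{eq:dynamic_property} and \eqref{eq:dynamic_pro_tau_neg} as a delay-free ODE for $(Y,Z)$ on $[0,h]$, vectorizing, and inverting the leading coefficient matrix $\mathcal D$ under Assumption~\ref{as:schur} (its Schur complement $I_{n^2}-D^{\top}\otimes D$ is indeed nonsingular) are all correct; in fact $\mathcal M=\mathcal D^{-1}\mathcal L$ is exactly the matrix $L$ that the paper itself constructs in Section~\ref{sec:rec_method} for computational purposes.

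The gap is in the step you yourself call the technical core, and the mechanism you propose for it is not merely unfinished but wrong. Eigenvalues of $\mathcal M$ are not points of $\Lambda$, and unvectorizing an eigenvector of $\mathcal M$ does not force the characteristic function to vanish at $s$ together with $-s$. Take $D=0$, $A_1=0$: the system decouples into $Y'=YA_0$, $Z'=-A_0^{\top}Z$, so the spectrum of $\mathcal M$ is the spectrum of $A_0$ together with that of $-A_0^{\top}$, with eigenvectors supported on the $Y$-block alone, regardless of whether the Lyapunov condition holds. What decides $\det\mathcal B\neq0$ is not membership of eigenvalues of $\mathcal M$ in $\Lambda$ but which combinations of modes can satisfy the boundary conditions, and relating \emph{that} to $\Lambda$ is the entire content of the theorem; your sketch asserts it rather than proves it. The classical argument splits into two genuinely different halves: (i) if $s\in\Lambda$ and $-s\in\Lambda$, the explicit matrix $U_0(\theta)=e^{s\theta}ab^{\top}+e^{-s\theta}ba^{\top}$, built from left null vectors $b^{\top}H(s)=0$ and $a^{\top}H(-s)=0$ of $H(s)=sI-A_0-se^{-sh}D-e^{-sh}A_1$, satisfies all three properties with $W=0$ and destroys uniqueness; (ii) conversely, under the Lyapunov condition every homogeneous solution of the three properties vanishes --- this uniqueness half rests on a bilinear-functional/characteristic-function argument, not on finite-dimensional spectral bookkeeping for $\mathcal M$. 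Two further defects: first, symmetry \eqref{eq:symmetry_property} is the functional identity $Z(\theta)=Y^{\top}(h-\theta)$ on all of $[0,h]$, not the pointwise condition that $Y(0)$ be symmetric; the dictionary between BVP solutions and Lyapunov matrices therefore requires the reflection argument (that $\bigl(Z^{\top}(h-\cdot),Y^{\top}(h-\cdot)\bigr)$ solves the same BVP) plus a symmetrization step, and without it the equivalence ``$\mathcal B$ nonsingular iff a unique Lyapunov matrix exists'' is unjustified --- a nontrivial homogeneous BVP solution whose symmetric part vanishes produces no second Lyapunov matrix. Second, for neutral systems the negation of the stated Lyapunov condition is $\inf|s_1+s_2|=0$, \emph{not} the existence of an exact pair with $s_1+s_2=0$: chains of roots of neutral systems approach vertical asymptotes, so the infimum can vanish with no exact pair, which is precisely why the statement demands a uniform gap $\eps$; your argument addresses only exact pairs and so, even if completed, would prove a different (and for neutral systems inequivalent) statement.
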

In the sequel, we use the following Lyapunov matrix-based functional \cite[Chapter~6, p.~242]{kharitonov2013time}
\begin{gather}\label{eq:functional}
    v_0(\ph) = (\ph(0) - D\ph(-h))^\top U(0) (\ph(0) - D\ph(-h)) + \Sum_{j=1}^6 I_j,
\end{gather}
where
\begin{align*}
    I_1 &= 2(\ph(0) -D\ph(-h))^\top\int_{-h}^0 U^{\top}(h+\theta)A_1\ph(\theta)\dd\theta\\ I_2&=- 2(\ph(0) -D\ph(-h))^\top\int_{-h}^0 U'^{\top}(h+\theta)D\ph(\theta)\dd\theta,\\
    I_3 &= \int_{-h}^0\int_{-h}^0 \ph^\top(\theta_1)A_1^{\top}U(\theta_1-\theta_2)A_1\ph(\theta_2)\dd\theta_2\dd\theta_1,\\
    I_4 &= 2\int_{-h}^0\int_{-h}^0 \ph^\top(\theta_1)A_1^{\top}U'(\theta_1-\theta_2)D\ph(\theta_2)\dd\theta_2\dd\theta_1,
\end{align*}
\begin{align*}
    I_5 &= -\int_{-h}^0\int_{-h}^{\theta_1} \ph^\top(\theta_1)D^{\top} U''(\theta_1-\theta_2)D\ph(\theta_2)\dd\theta_2\dd\theta_1\\
    &-\int_{-h}^0\int_{\theta_1}^{0} \ph^\top(\theta_1)D^{\top} U''(\theta_1-\theta_2)D\ph(\theta_2)\dd\theta_2\dd\theta_1,\\
    I_6 &= -\int_{-h}^0 \ph^\top(\theta)D^{\top}PD\ph(\theta)\dd\theta.
\end{align*}
and where $U(\theta)$ and $P$ are defined by Definition~\ref{def:matrix_U} with matrix $W>0$. Its time derivative along the solution of system \eqref{eq:time-delay_sys} has a prescribed quadratic negative value of the form
\begin{equation}\label{eq:presc_deri}
    \frac{\mathrm{d}v_0(x_t)}{\mathrm{d}t}=-x^\top(t)Wx(t),\quad t\geq 0.
\end{equation}
On the one hand, it was shown in \cite{alexandrova2019stability} that if system \eqref{eq:time-delay_sys} is exponentially stable, then functional $v_0(\ph)$ admits a quadratic lower bound considering a particular set of functions given by
\begin{equation}\label{set_S}
    \mathcal{S} = \Bigl\{\varphi\in \C\Bigl\arrowvert \|\varphi(0)\|=1, \|\varphi^{(k)}\|_h\leq r^k, \; \forall k\in \N\Bigr\},
\end{equation}
where $r=\frac{\|A_0\|+\|A_1\|}{1-\|D\|}$. 
On the other hand, this functional allows proving a crucial instability result reminded below.
\begin{lemma}\cite{alexandrova2019stability}\label{th:uns_v0}
 If system \eqref{eq:time-delay_sys} is unstable, then there exists a function $\ph\in \mathcal{S}$ such that
 \begin{equation*}
     v_0(\ph)<-a_0,\quad a_0=\dfrac{\lmin(W)}{4r}.
 \end{equation*}
\end{lemma}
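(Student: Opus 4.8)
The plan is to exploit the instability by constructing an explicit growing mode of the system and evaluating $v_0$ on it in closed form. Since the functional $v_0$ presupposes the existence of the delay Lyapunov matrix, the Lyapunov condition of Theorem~\ref{th:lyap_condition} holds; as the spectrum of a real system is symmetric with respect to the real axis, a purely imaginary root $i\omega$ would be accompanied by $-i\omega$, and this pair would violate $|s_1+s_2|>\eps$. Hence no eigenvalue lies on the imaginary axis, and instability yields a root $s_0$ with $a\eqd\Re(s_0)>0$ and a (generally complex) eigenvector $\gamma$, $\|\gamma\|=1$. First I would record the a priori bound $|s_0|\<r$: writing the characteristic relation as $s_0(I-e^{-s_0h}D)\gamma=(A_0+e^{-s_0h}A_1)\gamma$ and using $|e^{-s_0h}|\<1$ for $\Re(s_0)\>0$ together with Assumption~\ref{as:schur}, one gets $|s_0|(1-\|D\|)\<\|A_0\|+\|A_1\|$, i.e. $|s_0|\<r$.

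Next I would compute $v_0$ on the segments of this mode. Consider the complex solution $x(t)=e^{s_0t}\gamma$; since the coefficients are real, $\bar{x}(t)=e^{\bar{s}_0t}\bar{\gamma}$ is also a solution. As $v_0$ is a quadratic functional, let $B(\cdot,\cdot)$ denote its symmetric bilinear form, $B(\ph,\ph)=v_0(\ph)$; polarizing \eqref{eq:presc_deri} gives $\frac{\dd}{\dt}B(x_t,y_t)=-x^\top(t)Wy(t)$ along any two solutions. Applying this to the pair $x,\bar{x}$ with segment function $\ph(\theta)=e^{s_0\theta}\gamma$, homogeneity yields $B(x_t,\bar{x}_t)=e^{2at}B(\ph,\bar{\ph})$ and $x^\top(t)W\bar{x}(t)=e^{2at}\gamma^\top W\bar{\gamma}$, so integration gives $B(\ph,\bar{\ph})=-\gamma^\top W\bar{\gamma}/(2a)$. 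Because $W=W^\top$ is real, $\gamma^\top W\bar{\gamma}=\gamma^*W\gamma\eqd q$ is real with $q\>\lmin(W)>0$. Writing $y_0=\Re\,\ph$ and $z_0=\Im\,\ph$ and expanding through $B$, the diagonal terms $B(\ph,\ph)$ and $B(\bar\ph,\bar\ph)$ cancel and I obtain $v_0(y_0)+v_0(z_0)=B(\ph,\bar\ph)=-q/(2a)$. Consequently at least one of the two real functions, say $\ph_\star\in\{y_0,z_0\}$, satisfies $v_0(\ph_\star)\<-q/(4a)\<-\lmin(W)/(4r)=-a_0$, where the last inequality uses $q\>\lmin(W)$ and $a\<|s_0|\<r$. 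This is the origin of the constant $a_0$, the factor $4$ coming from the real/imaginary split together with $a\<r$.

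Finally I would verify that (a rescaling of) $\ph_\star$ lies in $\mathcal{S}$. The derivative bounds are immediate from $|s_0|\<r$: for every $k$ one has $\|\ph_\star^{(k)}(\theta)\|\<|s_0|^ke^{a\theta}\|\gamma\|\<r^k$ on $[-h,0]$, since $\theta\<0$ and $a>0$. It then remains to meet the normalization $\|\ph_\star(0)\|=1$; this is arranged by a phase rotation $\gamma\mapsto e^{i\psi}\gamma$ (equivalently a time shift of the mode) maximizing the real part, followed by rescaling, and rescaling by a factor $\<1$ only makes $v_0(\ph_\star)$ more negative, so $v_0(\ph_\star)<-a_0$ is preserved (strictness also following from $a<|s_0|$ in the genuinely complex case). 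I expect this reconciliation of the unit-norm normalization with the constraints $\|\ph^{(k)}\|_h\<r^k$ to be the main obstacle: normalizing a genuinely complex mode to unit initial value inflates the derivative norms, and it is precisely the a priori bound $|s_0|\<r$, together with the slack provided by the factor $4$ in $a_0$, that keeps the constructed function inside $\mathcal{S}$ while the strict inequality $v_0(\ph_\star)<-a_0$ survives.
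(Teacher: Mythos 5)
Your proposal is evaluated against the proof in the cited source \cite{alexandrova2019stability}, since the paper itself only quotes Lemma~\ref{th:uns_v0} without reproving it.

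The first two thirds of your argument are correct: excluding imaginary-axis roots via the Lyapunov condition, the bound $|s_0|\leq r$, and the polarization identity giving $v_0(\Re\ph)+v_0(\Im\ph)=B(\ph,\bar\ph)=-\gamma^*W\gamma/(2a)$, hence $\min\{v_0(\Re\ph),v_0(\Im\ph)\}\leq-\lmin(W)/(4a)\leq-\lmin(W)/(4r)$, all check out. The genuine gap is the final step: neither $\Re\ph$ nor $\Im\ph$ lies in $\mathcal{S}$, and your repair does not work. Membership in $\mathcal{S}$ requires $\|\ph_\star(0)\|=1$ \emph{and} $\|\ph_\star^{(k)}\|_h\leq r^k$; your candidate has $\|\ph_\star(0)\|\leq1$, in general strictly, and the only way to restore the normalization is to multiply by $1/\|\ph_\star(0)\|\geq1$, which inflates the $k$-th derivative bound to $|s_0|^k/\|\ph_\star(0)\|$. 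This factor is not under control: nothing prevents the component (real or imaginary part) that achieves the more negative functional value from being nearly zero at $\theta=0$, so $1/\|\ph_\star(0)\|$ is unbounded, while the slack $r/|s_0|\geq1$ can be arbitrarily close to $1$ and the factor $4$ has already been spent on the real/imaginary split. Your two proposed mechanisms fail concretely: (i) a phase rotation cannot achieve $\|\Re(e^{i\psi}\gamma)\|=1$ unless $\gamma$ is a phase multiple of a real vector --- for $\gamma=(1,i)^\top/\sqrt{2}$ one has $\|\Re(e^{i\psi}\gamma)\|=1/\sqrt{2}$ for \emph{every} $\psi$; (ii) ``rescaling by a factor $\leq1$'' is backwards: scaling down multiplies a negative quadratic value by a factor $\leq1$, making it \emph{less} negative, and in any case the rescaling actually needed is by a factor $\geq1$. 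Note that after such a rescaling even the $k=0$ requirement $\|\ph\|_h\leq1$ (the Razumikhin-type condition built into $\mathcal{S}$) is violated.

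The proof behind the cited result avoids normalizing the eigenmode altogether, and this is precisely the idea your attempt is missing. One works with the real entire solution $x(t)=\Re(e^{s_0t}\gamma)$, $t\in\R$, and evaluates segments at times $T$ where the norm attains its running maximum, $\|x(T)\|=\sup_{t\leq T}\|x(t)\|$ (such times recur indefinitely because $e^{-\Re(s_0)t}\|x(t)\|$ is periodic). At such $T$ the normalized segment $\psi=x_T/\|x(T)\|$ automatically satisfies $\|\psi(0)\|=1$ and $\|\psi\|_h\leq1$, and the derivative bounds come not from the eigenvalue but from the equation itself: iterating \eqref{eq:time-delay_sys} backwards,
\begin{equation*}
  \dot x(t)=\sum_{j\geq0}D^j\bigl[A_0x(t-jh)+A_1x(t-(j+1)h)\bigr],
\end{equation*}
the series converging since $\|D\|<1$ and the entire solution vanishes as $t\to-\infty$, yields $\|x^{(k)}(t)\|\leq r^k\sup_{s\leq t}\|x(s)\|$ --- this Neumann series in $D$ is exactly where $r=(\|A_0\|+\|A_1\|)/(1-\|D\|)$ comes from, and it bounds derivatives relative to the running maximum rather than relative to a unit eigenvector, which is what makes the normalization harmless. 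The functional value is then computed by integrating the prescribed derivative \eqref{eq:presc_deri} over $(-\infty,T]$, using $v_0(x_t)\to0$ as $t\to-\infty$, and an estimate of the resulting oscillatory cross terms at the maximizing times produces the constant $a_0=\lmin(W)/(4r)$. So your quantitative core is fine, but the construction of an admissible test function in $\mathcal{S}$ requires this running-maximum argument; as written, your proof establishes the inequality only for functions that need not belong to $\mathcal{S}$.
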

\begin{remark}
   We can take $\Re(s)\<|s|\<r$ when the system is unstable, i.e., for any eigenvalue such that $\Re(s)>0.$
\end{remark}

\subsection{Preliminaries on polynomial approximation}\label{sec:pol_proj}
Next, we present some results on polynomial approximation. In particular, their monomial representation and the convergence of Chebyshev polynomial projections. 
\subsubsection{General polynomial approximations}
This section introduces a monomial representation of polynomial approximations for the argument $\varphi\in\PC$ of functional \eqref{eq:functional}. This representation reorganizes and expresses any polynomial projection as a decoupling of monomials and coefficients. Replacing this decoupling into functional \eqref{eq:functional} leads to a quadratic form, whose matrix is determined by monomials and the delay Lyapunov matrix (see Section~\ref{sec:nec_cond}). 

Let us consider a function $\ph\in\PC$ and a $N$-order polynomial approximation of this function given by
\begin{equation}\label{eq:ini_aprox}
    \ph_N(\theta)=P_N^\top(\theta) \mathcal{Q},\quad \theta\in  [-h,0],\quad N\in\N,
\end{equation}
$$P_N^\top(\theta) = \begin{bmatrix}p_0(\theta) I_n&p_1(\theta) I_n&\cdots&p_{N-1}(\theta)I_n\end{bmatrix},$$
where $\{p_k\}_{k\in\overline{0,N-1}}$ is a set of polynomials of order lower than $N$ and, vector $\mathcal{Q}\in\R^{nN}$ collects the coefficients associated to the chosen polynomial approximation. A monomial representation of the previous polynomial approximation can be expanded, resulting in the following expression: 
\begin{equation}\label{eq:P_N}
    \ph_N(\theta)=\Theta_N^{\top}(\theta)\Phi_N ,
\end{equation}
$$\Theta_N^\top(\theta) = \begin{bmatrix}I_n&\theta I_n&\cdots&\theta^{N-1}I_n\end{bmatrix},$$
where $\Theta_N(\theta)$ is a matrix of monomials, and $\Phi_N\in\R^{nN}$ is a vector with suitable coefficients after expansion. Then, it follows from \eqref{eq:P_N} that any function $\ph\in\PC$ can be written as 
\begin{equation}\label{eq:phi}
\begin{split}
    \ph(\theta)&=\ph_N(\theta)+\tilde{\ph}_N(\theta),\\
    \ph_N(\theta)&=\Theta_N^{\top}(\theta)\Phi_N,\quad   \theta\in [-h,0],
\end{split}    
\end{equation}
where the function $\ph_N(\theta)$, $\theta\in [-h,0]$, denotes the polynomial approximation of $\varphi\in\PC$ given by any polynomial approximation, and $\tilde{\ph}_N(\theta)$, $\theta\in [-h,0]$, stands for the corresponding approximation error. 
\begin{remark}
    For instance, this general framework gathers Lagrange interpolation, orthogonal projections, or Taylor series.
\end{remark}

\subsubsection{Orthogonal polynomial projections}\label{sec:orthogonal}
In the approximation theory, the orthogonal polynomial bases are characterized by convergence properties and accurate approximations with few terms. 
 In this section, we assume that the sequence $\{p_k\}_{k\in\N}$ in \eqref{eq:ini_aprox} is an orthogonal polynomial basis in the Banach space $\PC$ equipped with the norm $\ph\mapsto \Int_{-h}^0 \|\ph(\theta)\|^2w(\theta)\textup{d}\theta$, where $w$ is a weight function.

 We prove next a useful property of orthogonal polynomial projection on the sequence $\{p_k\}_{k\in\N}$.

\begin{property}
    For any function $\varphi\in\PC$ and any order $N\in\mathbb{N}$, its  orthogonal polynomial projection is given by
    \begin{equation}\label{eq:approx1}
        \varphi_N:
        \left\{
        \begin{aligned}
        \relax [&-h,0\relax ]\to\mathbb{R}^n,\\
        \theta&\mapsto \varphi_N(\theta):=P_N^\top(\theta) D_N^{-1} \int_{-h}^0 P_N(\tau)\varphi(\tau)w(\tau)\mathrm{d}\tau,
        \end{aligned}
        \right.
    \end{equation}
    where $D_N=\int_{-h}^0 P_N(\tau) P_N^\top(\tau)w(\tau)\mathrm{d}\tau$.
    This approximation can also be written as
    \begin{equation}\label{eq:approx2}
        \varphi_N(\theta) = \Theta_N^\top(\theta) S_N^{-1} \int_{-h}^0\Theta_N(\tau)\varphi(\tau)w(\tau)\mathrm{d}\tau,
    \end{equation}
    where $S_N=\int_{-h}^0\Theta_N(\tau)\Theta_N^\top(\tau)w(\tau)\mathrm{d}\tau.$ Here, $D_N$ and 
$S_N$ are symmetric Gramm-Schmidt orthonormalization matrices.
\end{property}
\begin{proof}
    The expression \eqref{eq:approx1} follows from the Galerkin-like methods introduced in \cite{fletcher1984computational} and stands as a definition. Thus, we proceed to prove \eqref{eq:approx2}. First, we express the monomials in terms of their orthogonal polynomials approximation at order $N$ for any $k\in\{0,\dots, N-1\}$ as follows
    \begin{equation*}
        \theta^k I_n = P_N^\top(\theta)  D_N^{-1}\int_{-h}^0 P_N(\tau)\tau^k w(\tau)\mathrm{d}\tau.
    \end{equation*}
    Since a polynomial approximation can reconstruct any polynomial of order $N$, this is indeed an equality. By concatenation, we get
    \begin{equation}\label{eq:phinelln}
        \Theta_N^\top(\theta) = P_N^\top(\theta)  D_N^{-1}\underbrace{\int_{-h}^0 P_N(\tau)\Theta_N^\top(\tau)w(\tau)\mathrm{d}\tau}_{=T_N},
    \end{equation}
    where $T_N\in\R^{nN\times nN}$ is a nonsingular triangular matrix that links the bases $\Theta_N$ and $P_N$. 
    Note also that $\left(D_N^{-1}\right)^\top=D_N^{-1}$ and that
    \begin{equation*}
        S_N = \int_{-h}^0\Theta_N(\tau)\Theta_N^\top(\tau)w(\tau)\mathrm{d}\tau = \underbrace{\int_{-h}^0\Theta_N(\tau) P_N^\top(\tau)w(\tau)\mathrm{d}\tau}_{T_N^{\top}}  D_N^{-1} T_N.
    \end{equation*}
    Gathering the two previous expressions yields 
    \begin{equation*}
    \begin{split}
        \Theta_N^\top(\theta) S_N^{-1} \int_{-h}^0\Theta_N(\tau)\varphi(\tau)w(\tau)\mathrm{d}\tau &= \left(P_N^\top(\theta)  D_N^{-1}T_N\right) \left(T_N^{-1}D_N (T_N^{\top})^{-1}\right)\\
        &\phantom{=}\times \left(T_N^\top D_N^{-1} \int_{-h}^0 P_N(\tau)\varphi(\tau)w(\tau)\mathrm{d}\tau\right)\\
        &= P_N^{\top}(\theta)D_N^{-1}\int_{-h}^0 P_N(\tau)\varphi(\tau)w(\tau)\mathrm{d}\tau.        
    \end{split}        
    \end{equation*}
    Therefore, we proved that the approximation forms provided in~\eqref{eq:approx1} and~\eqref{eq:approx2} are equivalent.
\end{proof}
Note that $\mathcal{Q}=D_N^{-1} \int_{-h}^0 P_N(\tau)\varphi(\tau)w(\tau)\mathrm{d}\tau$ and $\Phi_N=S_N^{-1} \int_{-h}^0\Theta_N(\tau)\varphi(\tau)w(\tau)\mathrm{d}\tau$ for the polynomial approximations \eqref{eq:ini_aprox} and \eqref{eq:P_N}, respectively. 
\begin{remark}
    The first expression~\eqref{eq:approx1} is a standard formulation that comes from orthogonal polynomial approximation theory \cite{boyd2001chebyshev}. Due to orthogonality of the sequence $\{p_k\}_{k\in\mathbb{N}}$, the Gramm-Schmidt orthonormalization matrix $D_N$ is diagonal which simplifies the maneuverability of the approximation structure. The second expression~\eqref{eq:approx2} is an the equivalent monomial representation and is associated to the Gramm-Schmidt orthonormalization matrix $S_N$. Such a structure encompasses Legendre or Chebyshev polynomial projections.
\end{remark}

\subsubsection{Chebyshev orthogonal polynomial projection}\label{sec:chebyshev_orthogonal}
Among orthogonal polynomial bases, Chebyshev and Legendre's polynomials are widely investigated and applied to approximation theory due to the super-geometric convergence of their coefficients \cite{scholl2022ode}. However, the Chebyshev coefficients decay a $\sqrt{N}$ factor faster than Legendre ones \cite{fox1968chebyshev,wang2012convergence}. Therefore, we choose the Chebyshev polynomials as a suitable polynomial approximation basis to tackle the estimation of the functional approximation error. In this context, $\{p_k\}_{k\in\mathbb{N}}$ are the Chebyshev polynomial, which are given by $p_0(\theta)=1,~p_1(\theta)=\frac{2\theta}{h}+1,~p_2=\frac{8\theta}{h} + \frac{8\theta^2}{h^2} + 1$ for low orders, and the weight is equal to $w(\theta)=\frac{1}{\sqrt{1-(-\frac{2\theta+h}{h})^2}}$ for $\theta\in[-h,0]$. 

Next, a lemma on the Chebyshev convergence rate on the set of functions $\mathcal S$ is stated. 
\begin{lemma}\label{lemma:estimate_ph_tilde}
Consider $\ph\in\mathcal{S}$. The Chebyshev approximation error $\tilde{\ph}_N$ satisfies the following inequality 
\begin{equation}\label{eq:estimate_ph_tilde}
  \|\tilde{\ph}_N\|_h\< \frac{4\left(\frac{hr}{2}\right)^{N}}{N!} .
\end{equation}
\end{lemma}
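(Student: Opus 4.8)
The plan is to transport the problem to the canonical interval $[-1,1]$, exploit the factorial decay of the Chebyshev coefficients forced by the derivative bounds defining $\mathcal{S}$, and then sum the resulting super-geometric tail. First I would introduce the affine change of variable $\theta=\tfrac{h}{2}(x-1)$, which maps $[-1,1]$ onto $[-h,0]$ and carries the stated weight $w$ into the canonical Chebyshev weight $(1-x^2)^{-1/2}$. Writing $g(x)=\ph(\tfrac{h}{2}(x-1))$, the polynomials $p_k$ become the standard Chebyshev polynomials $T_k$ (consistent with the low-order expressions given above), so that $\ph_N$ and $\tilde{\ph}_N$ correspond to the truncation at degree $N-1$ of the Chebyshev series $g=\sum_{k\ge 0}a_kT_k$ and its tail $\sum_{k\ge N}a_kT_k$. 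The chain rule gives $g^{(k)}(x)=(\tfrac{h}{2})^k\ph^{(k)}(\tfrac{h}{2}(x-1))$, so the membership $\ph\in\mathcal{S}$, i.e. $\|\ph^{(k)}\|_h\le r^k$, translates into $\|g^{(k)}\|_\infty\le\rho^k$ with $\rho=\tfrac{hr}{2}$, uniformly in $k$. Since the affine map and its inverse send polynomials of degree $N-1$ to polynomials of the same degree, the error is preserved: $\|\tilde{\ph}_N\|_h=\sup_x\|g(x)-g_N(x)\|$.

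The core step is a per-coefficient estimate $\|a_k\|\le\frac{2\rho^k}{k!}$ for the vector-valued Chebyshev coefficients $a_k=\frac{2}{\pi}\int_0^\pi g(\cos\psi)\cos(k\psi)\,\mathrm{d}\psi$; working throughout with the Euclidean norm of the vector integrand keeps this constant dimension-free. This factorial decay is precisely the super-geometric convergence recalled in Section~\ref{sec:chebyshev_orthogonal}, and I would obtain it from the uniform bounds $\|g^{(k)}\|_\infty\le\rho^k$. Equivalently, since these bounds make $g$ extend to an entire function with $\|g(z)\|\le e^{\rho|z-x_0|}$, one may estimate $a_k$ by a contour integral over an enlarged Bernstein ellipse, or over a circle $|z|=R$ under the Joukowski map $x=\tfrac12(z+z^{-1})$, and optimise the radius, which yields the claimed rate.

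Finally, since $\|T_k\|_\infty=1$ on $[-1,1]$, the triangle inequality gives $\|\tilde{\ph}_N\|_h\le\sum_{k\ge N}\|a_k\|\le\frac{2\rho^N}{N!}\sum_{j\ge 0}(\tfrac{\rho}{N+1})^j$, using the ratio bound $\frac{\rho}{k+1}\le\frac{\rho}{N+1}$ valid for $k\ge N$. Once $N$ is large enough that $\tfrac{\rho}{N+1}\le\tfrac12$, the geometric tail is at most $2$, and the product of the two factors of $2$ produces the constant $4$ in \eqref{eq:estimate_ph_tilde}, with $\rho^N=(\tfrac{hr}{2})^N$.

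I expect the main obstacle to be the per-coefficient bound rather than the bookkeeping around it. A naive repeated integration by parts in the coefficient integral is delicate, because the substitution $x=\cos\psi$ makes each differentiation of $g(\cos\psi)$ produce, via the chain rule, a proliferation of lower-order derivatives of $g$ weighted by powers of $\sin\psi$; controlling the constant then requires either a careful induction tracking these terms or the cleaner complex-analytic estimate through the growth of the entire extension of $g$ on an optimised contour. A secondary point to handle with care is the threshold on $N$ implicit in summing the super-geometric tail (namely $N\gtrsim hr$), which is exactly the regime in which the approximation-order estimate of the sufficiency argument will operate.
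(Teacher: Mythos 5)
Your overall architecture (transport to $[-1,1]$, bound the Chebyshev coefficients, sum the tail) matches the paper's, but the two places where you deviate are exactly where the problems are. The serious one is the tail summation. You control $\sum_{k\geq N}\|a_k\|$ by a geometric series with ratio $\rho/(N+1)$, $\rho=hr/2$, and you need $\rho/(N+1)\leq 1/2$ to get the constant $4$; so your argument proves \eqref{eq:estimate_ph_tilde} only for $N\geq hr-1$. The lemma is stated, and used, for \emph{every} $N\in\N$: the quantity $\delta_N$ in \eqref{eq:est_functional_error_2} is built from \eqref{eq:estimate_ph_tilde} and feeds Theorem~\ref{th:sufficiency condition}, which certifies stability from \emph{any} natural number $N$ for which $\mathbf{P}_N-\mathcal{X}_N\geq 0$ — including $N$ far below $hr$. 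So dismissing the threshold as "the regime where the sufficiency argument operates" does not close the gap: with your proof, Theorem~\ref{th:sufficiency condition} would be unsound (or would need an added restriction $N\gtrsim hr$) for small $N$. The paper avoids any threshold: it takes the coefficient bound of \cite{trefethen2008gauss}, $\|a_k\|\leq 2\rho^{m+1}/\bigl(k(k-1)\cdots(k-m)\bigr)$ for $k\geq m+1$, keeps $m$ free, sums the tail by the \emph{exact telescoping} identity $\sum_{k\geq N}1/\bigl(k(k-1)\cdots(k-m)\bigr)=1/\bigl(m(N-1)\cdots(N-m)\bigr)$, and only then sets $m=N-1$, giving $\frac{2\rho^{N}}{N!}\frac{N}{N-1}\leq\frac{4\rho^{N}}{N!}$ for all $N\geq 2$ (with $N=0,1$ handled trivially). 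Note that your per-coefficient bound $\|a_k\|\leq 2\rho^{k}/k!$ is precisely this family at $m=k-1$; but once you commit to it, no geometric-series argument recovers the uniformity in $N$ — you need the telescoping (i.e., you must not spend all the derivatives on each coefficient before summing).

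The secondary issue is your derivation of the per-coefficient bound itself. The Bernstein-ellipse estimate for the entire extension of $g$ (type $\rho$) gives, after optimizing the radius at $R\approx 2k/\rho$, roughly $\|a_k\|\leq 2(\rho/2)^{k}\sqrt{2\pi k}\,e^{\rho^{2}/(4k)}/k!$: the Stirling factor and the factor $e^{\rho^{2}/(4k)}$ do not cancel, and the optimization requires $R>1$, so the estimate is unavailable (or worse than claimed) for $k\lesssim\rho$. This is patchable — for $k\leq\rho$ the claim is trivial because $k!\leq k^{k}\leq\rho^{k}$ and $\|a_k\|\leq\frac{2}{\pi}\int_0^{\pi}\|g(\cos\psi)\|\,\mathrm{d}\psi\leq 2$ (using $\|\varphi\|_h\leq 1$ on $\mathcal{S}$), and for $k$ sufficiently larger than $\rho$ the ellipse bound dominates — but this case analysis, with the constant $2$ to be fought for in the crossover region, is exactly the bookkeeping you hoped to avoid. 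The paper's integration-by-parts route (bounded variation of $g^{(m)}$, $V\leq 2\rho^{m+1}$) delivers the clean constants directly and for all admissible $k$, which is why it is the right tool here.
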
   
\begin{proof}
The proof is based on Theorem~4.2 in \cite{trefethen2008gauss} with a slight modification. First of all, we start shifting the Chebyshev polynomials and coefficients on the interval $[-h,0]$, as follows 
    \begin{equation*}
        a_k=\frac{4}{\pi h}\Int_{-h}^0 \frac{\ph(\theta)(-1)^k p_k(-\frac{2\theta+h}{h})}{\sqrt{1-(-\frac{2\theta+h}{h})^2}}\dd \theta,\quad \theta\in[-h,0].
    \end{equation*}
    The change of variable $x=-\frac{2\theta+h}{h}$ yields
    \begin{equation*}
        a_k=\frac{2(-1)^k}{\pi}\Int_{-1}^{1} \frac{\ph(-\frac{h}{2}(1+x)) p_k(x)}{\sqrt{1-x^2}}\dd x,\quad x\in[-1,1].
    \end{equation*}
    From here, one proceeds as in Theorem 4.2 in \cite{trefethen2008gauss}. Thus, the coefficients admit the upper bound
    \begin{equation*}
        |a_k|\< \frac{2\left(\frac{hr}{2}\right)^{m+1}}{k(k-1)\cdots(k-m)},\quad m\> 1,\quad k\> m+1.
    \end{equation*}
    Then, we obtain the following estimate for the approximation error
    \begin{equation*}
    \begin{aligned}
        \lVert \tilde{\varphi}_N\rVert_h
        &\< \sum_{k=N}^\infty \frac{2\left(\frac{hr}{2}\right)^{m+1}}{k(k-1)\cdots(k-m)},\\
        &= \frac{2\left(\frac{hr}{2}\right)^{m+1}}{m} \sum_{k=N}^\infty \left(\tfrac{1}{(k-1)\cdots(k-m)}-\tfrac{1}{k(k-1)\cdots(k-m+1)}\right),\\
        &\< \frac{2\left(\frac{hr}{2}\right)^{m+1}}{m(N-1)\cdots(N-m)},\quad m\>1,\quad N\>m+1.
    \end{aligned}
    \end{equation*}    
    Finally, by taking $m=N-1$, we conclude that for all $N\> 1$, $\lVert \tilde{\varphi}_N\rVert_h$ is bounded by
    \begin{equation*}
        \lVert \tilde{\varphi}_N\rVert_h \< \frac{2\left(\frac{hr}{2}\right)^{N}}{N!}\frac{N}{N-1}\< \frac{4\left(\frac{hr}{2}\right)^{N}}{N!}.
    \end{equation*}
    For $N=0$, we have $\lVert\tilde{\varphi}_N\rVert_h\<2\lVert\varphi\rVert_h\<2$, which also satisfies \eqref{eq:estimate_ph_tilde}, thus \eqref{eq:estimate_ph_tilde} holds for all $N\in\N$.
\end{proof}

\begin{remark}
    For $\ph\in \mathcal{S}$, Chebyshev polynomial approximation error converges with a super-geometric convergence rate. Through orthogonal projections, it is the best approximation to minimize the norm of the approximation error \cite{boyd2001chebyshev,fox1968chebyshev}. 
\end{remark}

\section{Main results}\label{sec:main_results}
In this section, we first present necessary positive semi-definite conditions stemming from general polynomial approximations of the functional argument $\ph$. Second, for the Chebyshev orthogonal polynomials projection, a sufficiency positive semi-definite condition and a stability condition verifiable in a tractable number of operations are presented. 

\subsection{Necessary conditions: general polynomial approximations}\label{sec:nec_cond}
By substituting the polynomial approximation \eqref{eq:P_N} of $\ph\in\PC$ into the functional \eqref{eq:functional}, we obtain the functional $v_0(\ph_N)$, which is the polynomial approximation of functional \eqref{eq:functional}. Observe that $I_1$ and $I_2$ reduce to
\begin{multline}\label{eq:J1N_J2N}
    I_1(\ph_N)+I_2(\ph_N)=2(\ph(0) - D\ph(-h))^{\top}\\\times\int_{-h}^0 \left(U^{\top}(h+\theta)A_1-U'^{\top}(h+\theta)D\right)\Theta_N^{\top}(\theta)\Phi_N\dd\theta
    \\=2(\ph(0) - D\ph(-h))^{\top}(J_{1N}-J_{2N}) \Phi_N,    
\end{multline}
where
\begin{align*}
    J_{1N} = \int_{-h}^0 U^{\top}(h+\theta)A_1\Theta_N^\top(\theta)\dd\theta,\quad J_{2N} = \int_{-h}^0 U^{\prime\top}(h+\theta)D\Theta_N^\top(\theta)\dd\theta.
\end{align*}
Evaluating the remaining summands of the functional, we get
\begin{equation}\label{eq:J3N_J7N}
\begin{split}
   \Sum_{i=3}^{6} I_i(\ph_N)&=\int_{-h}^0\int_{-h}^0 \Phi_N^\top\Theta_N(\theta_1)\left(A_1^{\top}U(\theta_1-\theta_2)A_1\right.\\
   &\phantom{=}\left.+2A_1^{\top}U'(\theta_1-\theta_2)D\right)\Theta_N^\top(\theta_2)\Phi_N\dd\theta_2\dd\theta_1\\
    &\phantom{=}-\int_{-h}^0\int_{-h}^{\theta_1} \Phi_N^\top\Theta_N(\theta_1)D^\top U''(\theta_1-\theta_2)D\Theta_N^\top(\theta_2)\Phi_N\dd\theta_2\dd\theta_1\\
&\phantom{=}-\int_{-h}^0\int_{\theta_1}^{0} \Phi_N^\top\Theta_N(\theta_1)D^\top U''(\theta_1-\theta_2)D\Theta_N^\top(\theta_2)\Phi_N\dd\theta_2\dd\theta_1\\
&\phantom{=}-\int_{-h}^0\Phi_N^\top\Theta_N(\theta)D^\top P D\Theta_N^\top(\theta)\Phi_N\dd\theta\\
&=\Phi_N^\top (J_{3N}+2J_{4N}-J_{5N}-J_{6N}) \Phi_N.
\end{split}    
\end{equation}
Here,
\begin{align*}
    J_{3N} &= \int_{-h}^0\int_{-h}^0 \Theta_N(\theta_1)A_1^\top U(\theta_1-\theta_2)A_1\Theta_N^\top(\theta_2)\dd\theta_2\dd\theta_1,\\
    J_{4N} &= \int_{-h}^0\int_{-h}^0 \Theta_N(\theta_1)A_1^\top U^\prime(\theta_1-\theta_2)D\Theta_N^\top(\theta_2)\dd\theta_2\dd\theta_1,\\
    J_{5N} &= \int_{-h}^0 \left(\int_{-h}^{\theta_1} \Theta_N(\theta_1)D^\top U''(\theta_1-\theta_2)D\Theta_N^\top(\theta_2)\dd\theta_2\right.\\
    &\phantom{=}\left.+\int_{\theta_1}^{0} \Theta_N(\theta_1)D^\top U''(\theta_1-\theta_2)D\Theta_N^\top(\theta_2)\dd\theta_2\right)\dd\theta_1,\\
     J_{6N} &= \int_{-h}^0 \Theta_N(\theta)D^\top P D\Theta_N^\top(\theta)\dd\theta.\\
\end{align*}

Note that, for any $\Psi(\theta)=\Psi^\top(-\theta),~\theta>0$, we have
\begin{equation*}
    \Int_{-h}^0\Int_{\theta_1}^{0} \theta_1^i\theta_2^j\Psi(\theta_1-\theta_2)\dd\theta_2\dd\theta_1 
    =\Int_{-h}^0\Int_{-h}^{\theta_1} \theta_2^{i}\theta_1^{j}\Psi^\top(\theta_1-\theta_2) \dd \theta_2 \dd \theta_1.
\end{equation*}


Hence, $J_{3N},~J_{4N}$ and$~J_{5N}$ can be rewritten as
\begin{align*}
    J_{3N} &= \mathrm{He}\!\left(\!\Int_{-h}^0\Int_{-h}^{\theta_1} \Theta_N(\theta_1)A_1^\top U(\theta_1-\theta_2)A_1\Theta_N^\top(\theta_2)\dd\theta_2\dd\theta_1\!\right)\!,\\
    J_{4N} &= \mathrm{He}\!\left(\!\Int_{-h}^0\Int_{-h}^{\theta_1} \Theta_N(\theta_1)A_1^\top U^\prime(\theta_1-\theta_2)D\Theta_N^\top(\theta_2)\dd\theta_2\dd\theta_1\!\right)\!,\\
    J_{5N} &= \mathrm{He}\!\left(\!\Int_{-h}^0\Int_{-h}^{\theta_1} \Theta_N(\theta_1)D^\top U^{\prime\prime}(\theta_1-\theta_2)D\Theta_N^\top(\theta_2)\dd\theta_2\dd\theta_1\!\right)\!.
\end{align*}
Gathering the approximated summands, we rewrite the polynomial approximation of functional \eqref{eq:functional} as:
\begin{equation}\label{eq:v_phN}
   v_0(\ph_N)= \Xi^{\top}\mathbf{P}_N \Xi,
\end{equation}
where
\begin{equation*} 
    \mathbf{P}_N = \begin{bmatrix}
        J_{0N} & J_{1N}+J_{2N}\\
        \star & J_{3N}+2J_{4N}-J_{5N}-J_{6N}
    \end{bmatrix},\quad J_{0N}=U(0),
\end{equation*}
\begin{equation*}
    \Xi=[(\ph(0)-D\ph(-h))^\top\quad \Phi_N^\top]^\top.
\end{equation*}
Now, for \textit{any} polynomial approximation of $\ph$ and \textit{any} order of approximation $N$, we give a necessary stability condition through the functional approximation \eqref{eq:v_phN}. 
\begin{theorem}\label{th:necessary_condition}
If system \eqref{eq:time-delay_sys} is exponentially stable, then the matrix $\mathbf{P}_{N}\geq 0,~\forall N.$
\end{theorem}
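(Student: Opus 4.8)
The plan is to combine the energy interpretation of the prescribed-derivative functional with a decoupling of the boundary term from the distributed coefficients. The first step is to upgrade exponential stability into the pointwise bound $v_0(\ph)\>0$ for \emph{every} $\ph\in\PC$. Integrating \eqref{eq:presc_deri} from $0$ to $T$ gives
\[
v_0(x_T)-v_0(\ph)=-\Int_0^T x^\top(t)\,W\,x(t)\,\dt .
\]
Since $v_0$ is quadratic with coefficients built from the continuous matrix $U$, one has $|v_0(x_T)|\<C\|x_T\|_h^2$; exponential stability gives $\|x_T\|_h\to0$, so $v_0(x_T)\to0$ and therefore $v_0(\ph)=\Int_0^\infty x^\top(t)Wx(t)\dt\>0$.

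The second step records, via \eqref{eq:v_phN}, that $v_0(\ph_N)=\Xi^\top\mathbf{P}_N\Xi$ depends on its argument only through $y:=\ph(0)-D\ph(-h)\in\R^n$ and $\Phi_N\in\R^{nN}$, so that $\mathbf{P}_N\>0$ is equivalent to $\Xi^\top\mathbf{P}_N\Xi\>0$ for all $\Xi$ ranging over $\R^{n(N+1)}$. The obstacle is that for the bare polynomial $\ph_N(\theta)=\Theta_N^\top(\theta)\Phi_N$ the endpoints are frozen, $y=\bigl(\Theta_N^\top(0)-D\Theta_N^\top(-h)\bigr)\Phi_N$, so substituting $\ph_N$ alone only certifies $\mathbf{P}_N$ on a proper $nN$-dimensional subspace, not on the whole matrix.

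To decouple $y$ from $\Phi_N$ I would perturb $\ph_N$ by thin spikes. For $\eps>0$ set $\ph^{(\eps)}=\ph_N+b_0\,\chi_0(\cdot/\eps)+b_{-h}\,\chi_{-h}(\cdot/\eps)$, with $\chi_0,\chi_{-h}\in\C$ fixed bumps supported on $[-\eps,0]$ and $[-h,-h+\eps]$ and equal to $1$ at $\theta=0$ and $\theta=-h$ respectively. Selecting $b_0,b_{-h}\in\R^n$ makes $y=\ph^{(\eps)}(0)-D\ph^{(\eps)}(-h)$ an arbitrary vector of $\R^n$, independent of $\Phi_N$. Because $\Int|\chi_0|,\Int|\chi_{-h}|=O(\eps)$ while $U,U',U''$ are bounded on the compact interval, every integral in $I_1,\dots,I_6$ differs from its value at $\ph_N$ by $O(\eps)$ — the cross terms scale like $\Int\chi$ and the spike-squared terms like $(\Int\chi)^2$ — whence $v_0(\ph^{(\eps)})\to\Xi^\top\mathbf{P}_N\Xi$ as $\eps\to0$, now with $y$ and $\Phi_N$ free and independent.

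Since $\ph^{(\eps)}\in\C\subset\PC$, the first step gives $v_0(\ph^{(\eps)})\>0$ for all $\eps$, and passing to the limit yields $\Xi^\top\mathbf{P}_N\Xi\>0$ for every $\Xi\in\R^{n(N+1)}$, i.e.\ $\mathbf{P}_N\>0$ for all $N$. I expect this decoupling step to be the crux: one must verify that all spike contributions to the single integrals in $I_1,I_2$ and the double integrals in $I_3,\dots,I_6$ vanish as $\eps\to0$, while the point values $\ph^{(\eps)}(0),\ph^{(\eps)}(-h)$ stay pinned, so that the limiting quadratic form is exactly $\mathbf{P}_N$ tested against an unconstrained $\Xi$.
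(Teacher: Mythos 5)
Your proof is correct, and its skeleton --- the energy identity $v_0(\ph)=\int_0^{\infty}x^\top(t,\ph)Wx(t,\ph)\,\dt\geq 0$ under exponential stability, followed by substitution of polynomial-type arguments to read off the quadratic form \eqref{eq:v_phN} --- is exactly the paper's. Where you genuinely depart is the step you correctly single out as the crux: decoupling $\ph(0)-D\ph(-h)$ from $\Phi_N$. The paper's proof does not address it explicitly; it substitutes $\ph_N$ only into the distributed terms while keeping $\ph(0)-D\ph(-h)$ as a free vector (see \eqref{eq:J1N_J2N} and the definition of $\Xi$), and then asserts nonnegativity ``for any $\Xi$''. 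The tacit justification is that $\PC$ admits jump discontinuities: given any $y\in\R^n$ and $\Phi_N\in\R^{nN}$, the function equal to $\Theta_N^\top(\theta)\Phi_N$ on $[-h,0)$ and to $y+D\Theta_N^\top(-h)\Phi_N$ at $\theta=0$ belongs to $\PC$, and since the integrals $I_1,\dots,I_6$ are insensitive to the value of the argument at a single point, $v_0$ of this function equals $\Xi^\top\mathbf{P}_N\Xi$ \emph{exactly}, with $y$ and $\Phi_N$ independent --- no limiting procedure required. Your smooth-spike construction reaches the same decoupling by an $\eps\to 0$ limit; it is valid (the $O(\eps)$ estimates hold because $U$, $U'$, $U''$ are bounded on the relevant intervals, and nonnegativity passes to the limit of the quadratic forms), and it buys something the paper's implicit argument does not: the test functions stay in $\C$, so the conclusion is certified using only smooth arguments, which is also the class on which the sufficiency analysis (the set $\mathcal{S}$) lives. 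The price is the extra continuity/limit bookkeeping; the paper's jump-function realization is shorter but leans on the discontinuous elements of $\PC$ without saying so.
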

\begin{proof}
Since system \eqref{eq:time-delay_sys} is exponentially stable, it is also stable, thus by construction of the functional, we have 
\begin{equation*}
    v_0(\ph)=\int_0^{+\infty} x^T(t,\ph)Wx(t,\ph)\textup{d}t\>0,
\end{equation*}
for any $\ph\in\PC$. In particular, for the polynomial approximation $\ph_N:[-h,0]\to\mathbb{R}^n$ given by \eqref{eq:P_N}, it follows from \eqref{eq:v_phN} that 
\begin{equation*}
    v_0(\ph_N)=\Xi^{\top}\mathbf{P}_N \Xi\>0,
\end{equation*}
for any $\Xi$ and $N$, which implies $\mathbf{P}_N\geq 0.$
\end{proof}
\begin{remark}
    Notice that an approximation of any order, even small, provides a necessary condition. 
\end{remark}

\subsection{Sufficient condition: Chebyshev polynomial projection}\label{sec:suff_cond}
Here, we focus on the Chebyshev polynomial projection introduced in Section~\ref{sec:chebyshev_orthogonal}. To conclude on the exponential stability of system \eqref{eq:time-delay_sys}, we first analyze how the approximation error of the Chebyshev orthogonal polynomial affects the functional approximation error $\Upsilon_N$. Next, we combine this result with the sufficient stability condition for functional $v(\ph)$ established in Lemma~\ref{th:uns_v0}.

The approximation error on the functional argument $\ph$ is conveyed to functional \eqref{eq:functional}, leading to the following expression for the functional approximation error
\begin{equation}\label{eq:functional_error}
    \Upsilon_N= v_0(\varphi)-v_0(\varphi_N),\quad \ph\in\mathcal{S}.
\end{equation}
The value $\Upsilon_N$ quantifies the error between $v_0(\varphi_N)$ and $v_0(\varphi)$ on the set of functions $\ph\in\mathcal{S}$. We now proceed to compute and estimate the functional approximation error in order to transfer the convergence properties of Chebyshev to the error~$\Upsilon_N$.

 Taking into account that if the Lyapunov condition in Theorem~\ref{th:lyap_condition} holds, then $U(s)$ is guaranteed to be bounded, we introduce the constants
\begin{equation*}
    M_1=\sup_{\theta\in (0,h)}\|U^{\top}(\theta)A_1-U'^{\top}(\theta)D\|,
\end{equation*}
\begin{equation*}
    M_2=\sup_{\theta\in (0,h)}\|A_1^{\top}U(\theta)A_1+2A_1^{\top}U'(\theta)D - D^\top U''(\theta)D\|,
\end{equation*}
\begin{equation*}
    M_3=\|D^\top P D\|.
\end{equation*}
Denote $\Psi(\theta)=A_1^{\top}U(\theta)A_1+2A_1^{\top}U'(\theta)D$ and notice that the term $\Upsilon_N$ can be presented as
\begin{equation*}    \Upsilon_N=\Sum_{i=1}^6 I_i^{\mathrm{error}},
\end{equation*}
where 
\begin{equation*}
    I_1^{\mathrm{error}}+I_2^{\mathrm{error}}=2[\varphi(0)-D\varphi(-h)]^\top \int_{-h}^0 \left(U^{\top}(h+\theta)A_1-U'^{\top}(h+\theta)D\right)\tilde{\ph}_N(\theta)\mathrm{d}\theta,   
\end{equation*}
\begin{equation*}
\begin{split}
   &\Sum_{i=3}^{6} I_i^{\mathrm{error}}=\int_{-h}^0\int_{-h}^0 \ph^\top(\theta_1)\Psi(\theta_1-\theta_2)\ph(\theta_2)\dd\theta_2\dd\theta_1\\
   &-\int_{-h}^0\int_{-h}^0 \ph^\top_N(\theta_1)\Psi(\theta_1-\theta_2)\ph_N(\theta_2)\dd\theta_2\dd\theta_1\\
    &-\int_{-h}^0\int_{-h}^{\theta_1} \ph^\top(\theta_1)D^\top U''(\theta_1-\theta_2)D\ph(\theta_2)\dd\theta_2\dd\theta_1\\
    &+\int_{-h}^0\int_{-h}^{\theta_1} \ph_N^\top(\theta_1)D^\top U''(\theta_1-\theta_2)D\ph_N(\theta_2)\dd\theta_2\dd\theta_1\\
&-\int_{-h}^0\int_{\theta_1}^{0} \ph^\top(\theta_1)D^\top U''(\theta_1-\theta_2)D\ph(\theta_2)\dd\theta_2\dd\theta_1
\end{split}    
\end{equation*}
\begin{equation*}
\begin{split}
&+\int_{-h}^0\int_{\theta_1}^{0} \ph_N^\top(\theta_1)D^\top U''(\theta_1-\theta_2)D\ph_N(\theta_2)\dd\theta_2\dd\theta_1\\
&-\int_{-h}^0\ph^\top(\theta)D^\top P D\ph(\theta)\dd\theta+\int_{-h}^0\ph_N^\top(\theta)D^\top P D\ph_N(\theta)\dd\theta.
\end{split}    
\end{equation*}
Substituting $\ph_N=\ph-\tilde{\ph}_N$, we have

\begin{equation*}
\begin{split}
   &\Sum_{i=3}^{6} I_i^{\mathrm{error}}= 2\int_{-h}^0\int_{-h}^0 \ph^{\top}(\theta_1)\Psi(\theta_1-\theta_2)\tilde{\ph}_N(\theta_2)\dd\theta_2\dd\theta_1\\
   &-\int_{-h}^0\int_{-h}^0 \tilde{\ph}_N^\top(\theta_1)\Psi(\theta_1-\theta_2)\tilde{\ph}_N(\theta_2)\dd\theta_2\dd\theta_1\\
    &-2\int_{-h}^0\int_{-h}^{\theta_1} \ph^{\top}(\theta_1)D^{\top} U''(\theta_1-\theta_2)D\tilde{\ph}_N(\theta_2)\dd\theta_2\dd\theta_1\\
    &+\int_{-h}^0\int_{-h}^{\theta_1} \tilde{\ph}_N^\top(\theta_1)D^{\top} U''(\theta_1-\theta_2)D\tilde{\ph}_N(\theta_2)\dd\theta_2\dd\theta_1\\
    &-2\int_{-h}^0\int_{\theta_1}^{0} \ph^{\top}(\theta_1)D^{\top} U''(\theta_1-\theta_2)D\tilde{\ph}_N(\theta_2)\dd\theta_2\dd\theta_1\\
    &+\int_{-h}^0\int_{\theta_1}^{0} \tilde{\ph}_N^\top(\theta_1)D^{\top} U''(\theta_1-\theta_2)D\tilde{\ph}_N(\theta_2)\dd\theta_2\dd\theta_1\\
&-2\int_{-h}^0\tilde{\ph}_N^\top(\theta)D^{\top} P D\ph(\theta)\dd\theta+\int_{-h}^0\tilde{\ph}_N^\top(\theta)D^{\top} P D\tilde{\ph}_N(\theta)\dd\theta.
\end{split}    
\end{equation*}

Considering $\varphi\in \mathcal{S}$, $\|\ph(\theta)\|\<\|\ph(0)\|=1$ and Lemma~\ref{lemma:estimate_ph_tilde}, the summands of $\Upsilon_N$ are estimated by:
\begin{equation*}
    |I_1^{\mathrm{error}}+I_2^{\mathrm{error}}|\< 2h(1+\|D\|)M_1\|\tilde{\ph}_N\|_h,  
\end{equation*}
\begin{equation*}
\left|\Sum_{i=3}^{6} I_i^{\mathrm{error}}\right|\< 2h(hM_2 + M_3)\|\tilde{\ph}_N\|_h+h(hM_2+M_3)\|\tilde{\ph}_N\|_h^2.
\end{equation*}
Hence, the functional approximation error $\Upsilon_N$ admits the upper bound
\begin{equation}\label{eq:est_functional_error}
    |\Upsilon_N|\< 2c_1\|\tilde{\ph}_N\|_h+c_2\|\tilde{\ph}_N\|_h^2,
\end{equation}  
$$c_1=h((1+\|D\|)M_1+hM_2+M_3),\quad c_2=h(hM_2+M_3).$$
Using the upper bound \eqref{eq:estimate_ph_tilde} on the argument error of the Chebyshev approximation of $\ph$, we obtain
\begin{equation}\label{eq:est_functional_error_2}
    \small |\Upsilon_N|\< 2c_1\frac{4\left(\frac{hr}{2}\right)^{N}}{N!}+c_2\left(\frac{4\left(\frac{hr}{2}\right)^{N}}{N!}\right)^2\<\delta_N=\frac{(8c_1\!+\!16c_2)\mu^N}{N!},
\end{equation} 
where $\mu=\max\left\{\frac{hr}{2},\left(\frac{hr}{2}\right)^2\right\}$ and $r$ is defined in \eqref{set_S}.  

Next, a sufficiency stability condition based on the functional approximation \eqref{eq:v_phN}.
\begin{theorem}\label{th:sufficiency condition}
Let $\mathbf{P}_{N}$ in \eqref{eq:v_phN} where $\Phi_N$ is determined through Chebyshev polynomial projection. If the Lyapunov condition holds and there exists a natural number $N$ such that
\begin{equation}\label{eq:suff_condition}
    \mathbf{P}_{N}-\mathcal{X}_N\geq 0,
\end{equation}
where the matrix $\mathcal{X}_N$ is of the same dimension as $\mathbf{P}_{N}$ with the upper left $n\times n$ block equal to $\frac{\delta_N}{(1-\|D\|)^2}I_n$ and all other blocks equal to zero, and $\delta_N$ defined in \eqref{eq:est_functional_error_2}, then system \eqref{eq:time-delay_sys} is exponentially stable.
\end{theorem}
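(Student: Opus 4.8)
The plan is to argue by contradiction, assembling the instability result Lemma~\ref{th:uns_v0}, the functional error estimate \eqref{eq:est_functional_error_2}, and the quadratic form \eqref{eq:v_phN}. Suppose system \eqref{eq:time-delay_sys} is \emph{not} exponentially stable. Because the Lyapunov condition of Theorem~\ref{th:lyap_condition} holds, no two spectral points sum to zero; in particular there is no eigenvalue on the imaginary axis, since a purely imaginary eigenvalue would pair with its complex conjugate to sum to zero. Hence the system must possess an eigenvalue with $\Re(s)>0$, i.e.\ it is unstable in the sense of Lemma~\ref{th:uns_v0}. The Lyapunov condition simultaneously guarantees that the delay Lyapunov matrix $U$ exists, is unique and bounded, so that $\mathbf{P}_N$, the constants $M_1,M_2,M_3$, and thus $\delta_N$ are all well defined and finite.

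First I would invoke Lemma~\ref{th:uns_v0} to obtain a function $\ph\in\mathcal{S}$ with $v_0(\ph)<-a_0$, where $a_0=\tfrac{\lmin(W)}{4r}>0$. Taking $\ph_N$ to be the Chebyshev polynomial projection of this $\ph$, the definition of the functional error $\Upsilon_N=v_0(\ph)-v_0(\ph_N)$ together with the uniform bound \eqref{eq:est_functional_error_2} yields
\begin{equation*}
v_0(\ph_N)=v_0(\ph)-\Upsilon_N\<v_0(\ph)+|\Upsilon_N|<-a_0+\delta_N.
\end{equation*}

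Next I would use the quadratic form \eqref{eq:v_phN}, namely $v_0(\ph_N)=\Xi^\top\mathbf{P}_N\Xi$ with $\Xi=[(\ph(0)-D\ph(-h))^\top~\Phi_N^\top]^\top$, and the standing hypothesis $\mathbf{P}_N-\mathcal{X}_N\geq 0$. Since $\mathcal{X}_N$ has only its upper-left block $\tfrac{\delta_N}{(1-\|D\|)^2}I_n$ nonzero, this gives
\begin{equation*}
v_0(\ph_N)=\Xi^\top\mathbf{P}_N\Xi\geq\Xi^\top\mathcal{X}_N\Xi=\frac{\delta_N}{(1-\|D\|)^2}\,\|\ph(0)-D\ph(-h)\|^2.
\end{equation*}
The decisive calibration is that for $\ph\in\mathcal{S}$ one has $\|\ph(0)\|=1$ and $\|\ph(-h)\|\<\|\ph(0)\|=1$, so that $\|\ph(0)-D\ph(-h)\|\geq 1-\|D\|>0$ by Assumption~\ref{as:schur}; the factor $(1-\|D\|)^2$ then cancels the denominator and leaves $v_0(\ph_N)\geq\delta_N$.

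Combining the two displayed estimates produces $\delta_N\<v_0(\ph_N)<-a_0+\delta_N$, whence $a_0<0$, contradicting $a_0=\tfrac{\lmin(W)}{4r}>0$. Therefore the system is exponentially stable. The genuinely delicate point is the spectral reduction in the first paragraph, converting ``not exponentially stable'' into ``unstable with an eigenvalue in the open right half-plane'' so that Lemma~\ref{th:uns_v0} applies; this rests entirely on the Lyapunov condition excluding imaginary-axis spectrum. The remainder is a bookkeeping assembly of results already established, the one insight being the tuned scaling of $\mathcal{X}_N$, chosen precisely so that $\Xi^\top\mathcal{X}_N\Xi\geq\delta_N$ holds uniformly over $\mathcal{S}$, which is exactly what makes the two bounds incompatible.
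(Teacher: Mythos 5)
Your proof is correct and follows essentially the same route as the paper's: a contradiction argument combining Lemma~\ref{th:uns_v0}, the error bound $\delta_N$ from \eqref{eq:est_functional_error_2}, and the calibration inequality $\|\ph(0)-D\ph(-h)\|\geq(1-\|D\|)\|\ph(0)\|$ valid on $\mathcal{S}$, which is exactly how $\mathcal{X}_N$ is tuned. The differences are only organizational --- you sandwich $v_0(\ph_N)$ between $\delta_N$ and $-a_0+\delta_N$ where the paper bounds $v_0(\ph)$ from below by $\Xi^{\top}(\mathbf{P}_N-\mathcal{X}_N)\Xi$ and compares with $-a_0$, and your opening spectral reduction from ``not exponentially stable'' to ``unstable with a right half-plane eigenvalue'' spells out a step the paper's proof leaves implicit by directly assuming instability.
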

\begin{proof}
    By contradiction, we assume that system \eqref{eq:time-delay_sys} is unstable, but condition \eqref{eq:suff_condition} holds. Now, it follows from \eqref{eq:v_phN}, \eqref{eq:functional_error} and \eqref{eq:est_functional_error_2} that functional \eqref{eq:functional} admits the lower bound
\begin{equation*}
\begin{split}
    v_0(\ph)&\>\Xi^{\top}\mathbf{P}_N \Xi-\delta_N \|\ph(0)\|^2\\
    &\>\Xi^{\top}\mathbf{P}_N \Xi-\frac{\delta_N}{(1-\|D\|)^2} \|\ph(0)-D\ph(-h)\|^2\\
    &=\Xi^{\top}(\mathbf{P}_{N}-\mathcal{X}_N) \Xi.
\end{split}   
\end{equation*}  
 According to the assumption of instability of the system and Lemma~\ref{th:uns_v0}, there exists a constant $a_0>0$ such that
\begin{equation*}
    \Xi^{\top}(\mathbf{P}_{N}-\mathcal{X}_N) \Xi\<v_0(\ph)<-a_0.
\end{equation*}
This implies that $\mathbf{P}_{N}-\mathcal{X}_N<0$, which is a contradiction, thus \eqref{eq:suff_condition} is proved.
\end{proof}

\begin{corollary}\label{cor:nec_and_suf}
System \eqref{eq:time-delay_sys} is exponentially stable, if and only if the Lyapunov condition holds and the matrix $\mathbf{P}_{N^{\star}}\geq 0$, where the order of approximation is given by
\begin{equation}\label{eq:nstar}
    N^{\star}=\left\lceil \mu \exp\left[1+\mathcal{W}\left(\frac{1}{\exp(1)\mu}\log \left(\frac{8c_1+16c_2}{a_0}\right)\right)\right]\right\rceil.
\end{equation}
Here, the constant $a_0$ is determined in Theorem~\ref{th:uns_v0}.
\end{corollary}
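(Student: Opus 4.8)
The plan is to prove the two directions of the equivalence separately: necessity will rest on Theorem~\ref{th:necessary_condition} together with an eigenvalue argument for the Lyapunov condition, while sufficiency will be a contradiction argument built on the instability criterion of Lemma~\ref{th:uns_v0}. The explicit value~\eqref{eq:nstar} of $N^{\star}$ will be forced by the single requirement that the functional approximation error not exceed the instability gap $a_0$.

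\emph{Necessity.} Suppose \eqref{eq:time-delay_sys} is exponentially stable. Then every spectral point obeys $\Re(s)\le-\sigma$ for some $\sigma>0$, so any two spectral points $s_1,s_2$ satisfy $|s_1+s_2|\ge|\Re(s_1)+\Re(s_2)|\ge 2\sigma>0$; this verifies the Lyapunov condition of Theorem~\ref{th:lyap_condition} with, say, $\eps=\sigma$. Theorem~\ref{th:necessary_condition} then yields $\mathbf{P}_N\ge 0$ for every $N$, and in particular $\mathbf{P}_{N^{\star}}\ge 0$.

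\emph{Sufficiency.} I would assume the Lyapunov condition holds (so that $U$ is bounded and the quantities $\mathbf{P}_{N^{\star}}$, $c_1$, $c_2$, $\delta_{N^{\star}}$ are well-defined) together with $\mathbf{P}_{N^{\star}}\ge 0$, and argue by contradiction. Were the system unstable, Lemma~\ref{th:uns_v0} would supply $\ph\in\mathcal{S}$ with $v_0(\ph)<-a_0$. Writing $v_0(\ph)=\Xi^{\top}\mathbf{P}_{N^{\star}}\Xi+\Upsilon_{N^{\star}}$, the semidefiniteness $\mathbf{P}_{N^{\star}}\ge 0$ and the bound $|\Upsilon_{N^{\star}}|\le\delta_{N^{\star}}$ from~\eqref{eq:est_functional_error_2} give $v_0(\ph)\ge-\delta_{N^{\star}}$, whence $\delta_{N^{\star}}>a_0$. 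The argument therefore hinges entirely on showing that the prescribed order satisfies $\delta_{N^{\star}}\le a_0$; the resulting contradiction establishes stability, which, the Lyapunov condition excluding imaginary-axis roots, is in fact exponential.

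\emph{The key estimate $\delta_{N^{\star}}\le a_0$.} Recalling $\delta_N=\frac{(8c_1+16c_2)\mu^N}{N!}$, I would invoke Stirling's lower bound $N!\ge (N/e)^N$ to reduce the target to $\bigl(\tfrac{e\mu}{N}\bigr)^{N}\le\tfrac{a_0}{8c_1+16c_2}$, i.e.\ after taking logarithms, $N\log\!\bigl(\tfrac{N}{e\mu}\bigr)\ge L$ with $L=\log\!\bigl(\tfrac{8c_1+16c_2}{a_0}\bigr)$. Solving the boundary equation $N\log(N/(e\mu))=L$ via the substitution $N=e\mu\,e^{w}$ turns it into $w\,e^{w}=\tfrac{L}{e\mu}$, so $w=\mathcal{W}\!\bigl(\tfrac{1}{e\mu}L\bigr)$ and $N=\mu\exp\!\bigl[1+\mathcal{W}\bigl(\tfrac{1}{e\mu}L\bigr)\bigr]$, which is precisely~\eqref{eq:nstar}. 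Since $t\mapsto t\log(t/(e\mu))$ is increasing for $t>\mu$ and the real solution exceeds $e\mu>\mu$, rounding up with the ceiling keeps the left-hand side at least $L$, giving $\delta_{N^{\star}}\le a_0$. I expect this transcendental inversion to be the main obstacle: one must pick the correct Stirling estimate and recognize the Lambert-$\mathcal{W}$ form of the inequality, so that the ceiling of a closed-form expression delivers a \emph{guaranteed} order rather than a merely asymptotic one.
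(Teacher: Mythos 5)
Your proposal is correct and follows essentially the same route as the paper's proof: the same contradiction argument built on Lemma~\ref{th:uns_v0} and the decomposition $v_0(\varphi)=v_0(\varphi_{N^{\star}})+\Upsilon_{N^{\star}}$, the same factorial lower bound $N!\geq (N/\exp(1))^{N}$ (which the paper obtains via the Maclaurin integral test and you via Stirling), and the same Lambert-$\mathcal{W}$ inversion producing \eqref{eq:nstar}. Your two refinements --- explicitly verifying the Lyapunov condition from exponential stability in the necessity direction, and the monotonicity-plus-ceiling argument guaranteeing $\delta_{N^{\star}}\leq a_0$ --- only make explicit steps the paper leaves implicit.
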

\begin{proof}
\textit{[Necessity]} The necessity follows from Theorem~\ref{th:necessary_condition}. In particular, $N=N^{\star}$.

\textit{[Sufficiency]} By contradiction, assume that system \eqref{eq:time-delay_sys} is unstable but that $\mathbf{P}_{N^{\star}}$ is positive semi-definite. Moreover, recall that $v_0(\ph)$ can be expressed as
\begin{equation*}
    v_0(\varphi)=v_0(\varphi_{N^{\star}})+\Upsilon_{N^{\star}}.
\end{equation*} 
By Lemma~\ref{th:uns_v0}, there exists $\varphi\in \mathcal{S}$ such that $v_0(\ph)<-a_0$, which implies that 
\begin{equation*}
    v_0(\varphi_{N^{\star}})=v_0(\varphi)-\Upsilon_{N^{\star}}< -a_0+|\Upsilon_{N^{\star}}|.
\end{equation*}
Then, for $N^\star$, it follows from \eqref{eq:est_functional_error} that the functional approximation error admits the upper bound
\begin{equation*}
        |\Upsilon_{N^\star}|\< \frac{(8c_1+16c_2)\mu^{N^\star}}{N^\star!}.
\end{equation*}
Applying the logarithm to \eqref{eq:est_functional_error} and the Maclaurin integral test give the following upper bound
\begin{equation*}
\begin{split}
    \log \left(|\Upsilon_{N^\star}|\right) &\< \log (8c_1+16c_2) + N^\star\log \left(\mu\right)  - \int_{0}^{N^\star}\log(s)\mathrm{d}s \\
    &= \log \left(8c_1+16c_2\right) - N^\star\log \left(\frac{N^\star}{\exp(1)\mu}\right),
\end{split}    
\end{equation*}   
equivalently,
\begin{equation*}
    \frac{1}{\exp(1)\mu} \log \left(\frac{|\Upsilon_{N^\star}|}{8c_1+16c_2}\right) \< - \frac{N^\star}{\exp(1)\mu}\log \left(\frac{N^\star}{\exp(1)\mu}\right).
\end{equation*} 
Under the condition
\begin{equation*}
   \log \left(\frac{N}{\exp(1)\mu}\right)\> \mathcal{W}\left(\frac{1}{\exp(1)\mu}\log \left(\frac{8c_1+16c_2}{a_0}\right)\right),
\end{equation*}
which is satisfied for order $N^\star$ given by~\eqref{eq:nstar}, we have $\log(|\Upsilon_{N^\star}|)\<\log(a_0)$ implying that $|\Upsilon_{N^{\star}}|\< a_0$. Hence, 
\begin{equation*}
    v_0(\varphi_{N^{\star}})< -a_0+|\Upsilon_{N^{\star}}|\<0,
\end{equation*}
contradicting the statement of positive semi-definiteness.
\end{proof}
 Corollary~\ref{cor:nec_and_suf} merges Theorem~\ref{th:necessary_condition} and \ref{th:sufficiency condition} to reach the following necessary and sufficient stability condition.

\section{Recursive method for projection computation}\label{sec:rec_method}
Following closely the ideas introduced in \cite{irina2022criterion}, we devote this section to present efficient tools for computing the integrals involved in determining matrix $\mathbf{P}_{N}$. These matrices are essential for conducting the stability test discussed in the preceding sections. The computation of these integrals imposes a significant numerical burden, potentially compromising the proper evaluation of the condition. Our aim is to ensure accurate condition testing by reducing the computational load.

Observe that $J_{1N},~J_{2N},~J_{3N},~J_{4N},~J_{5N}$ and~$J_{6N}$ defined in \eqref{eq:J1N_J2N} and~\eqref{eq:J3N_J7N} involve the following elementary integral matrices for $k=\overline{0,N-1}$ and $i,j=\overline{0,N-1}$:
\begin{align*}
&G_k=\Int_{-h}^0 U(h+\theta)\theta^k\dd s,\quad\Bar{G}_k=\Int_{-h}^0 U(\theta)\theta^k \dd \theta,\\
&H_{ij}= \Int_{-h}^0\Int_{-h}^{\theta_1} \theta_1^i\theta_2^jU(\theta_1-\theta_2) \dd \theta_2 \dd \theta_1,\quad
\Bar{H}_{ij}=\Int_{-h}^0\Int_{-h}^{\theta_1} \theta_1^i\theta_2^jU(\theta_1-\theta_2-h) \dd \theta_2 \dd \theta_1,
\end{align*}
which must be computed. To carry out this task, let us consider the dynamic property \eqref{eq:dynamic_property}, denote 
\begin{equation*}
    L=\begin{bmatrix}
      I\otimes I & -I\otimes D\\
      -D^{\top}\otimes I & I\otimes I
    \end{bmatrix}^{-1}\begin{bmatrix}
      I\otimes A_0 & I\otimes A_1\\
      - A_1^{\top}\otimes I & -A_0^{\top}\otimes I
    \end{bmatrix},
\end{equation*}
and define the vectors $G_k^v=\textup{vec}(G_k)$, $\bar{G}_k^v=\textup{vec}(\bar{G}_k)$, $H_{ij}^v=\textup{vec}(H_{ij})$, $\bar{H}_{ij}^v=\textup{vec}(\bar{H}_{ij})$, $U^v(0)=\textup{vec}(U(0))$ and $U^v(-h)=\textup{vec}(U(-h))$. 
Next, we introduce two propositions allowing the recursive computation of the above integral.
\begin{proposition}\label{pro:G}
    If the Lyapunov condition holds and $\mathrm{det}(L)\neq 0$, then the set of matrices $\{G_k\}_{k=\overline{0,N-1}}$ can be computed as
    \begin{align}
        \begin{bmatrix}G_0^v\\\Bar{G}_0^v\end{bmatrix} &= L^{-1}
    \begin{bmatrix}
      U^v(h)-U^v(0)\\
      U^v(0)-U^v(-h)
    \end{bmatrix},\label{eq:G0}\\
     \begin{bmatrix}G_k^v\\\Bar{G}_k^v\end{bmatrix} &= -L^{-1}\left(k\begin{bmatrix}G_{k-1}^v\\\Bar{G}_{k-1}^v\end{bmatrix} + (-h)^k \begin{bmatrix}U^v(0)\\U^v(-h)\end{bmatrix}\right),\, k\neq 0.
     \label{eq:Gk}
    \end{align}
\end{proposition}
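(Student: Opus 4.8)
The plan is to obtain, for each order $k$, a pair of matrix identities linking $G_k,\bar G_k$ to $G_{k-1},\bar G_{k-1}$ and to the boundary values $U(0),U(-h),U(h)$, then to vectorise and recognise the resulting block system through the factorisation $L=\mathcal M^{-1}\mathcal K$, where $\mathcal M$ and $\mathcal K$ denote the first (bracketed, to be inverted) and second block matrices in the definition of $L$.

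First, I would rewrite the dynamic property \eqref{eq:dynamic_property} with argument $h+\tau$, $\tau\in(-h,0)$, as $U'(h+\tau)-U'(\tau)D=U(h+\tau)A_0+U(\tau)A_1$, multiply by $\tau^k$, and integrate over $[-h,0]$. Since the Lyapunov condition guarantees that $U$ is continuous (and smooth on each open subinterval), integration by parts on the two derivative terms is legitimate: $\int_{-h}^0 U'(h+\tau)\tau^k\,\dd\tau$ and $\int_{-h}^0 U'(\tau)\tau^k\,\dd\tau$ reduce to boundary evaluations at $U(h),U(0),U(-h)$ minus $k$ times $G_{k-1}$ and $\bar G_{k-1}$ respectively, while the right-hand side is exactly $G_kA_0+\bar G_k A_1$. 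This produces the first matrix equation. Carrying out the identical manipulation on the negative-argument relation \eqref{eq:dynamic_pro_tau_neg} yields a second, independent matrix equation, so that $G_k$ and $\bar G_k$ become coupled in a $2\times2$ block system.

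Next, I would vectorise both equations via $\textup{vec}(AXB)=(A\otimes B)\textup{vec}(X)$. The crucial observation is that, after collecting terms, the coefficients multiplying the stacked unknown $\begin{bmatrix}G_k^v\\\bar G_k^v\end{bmatrix}$ assemble exactly into $\mathcal K$, whereas the contributions carrying $\begin{bmatrix}G_{k-1}^v\\\bar G_{k-1}^v\end{bmatrix}$ and $\begin{bmatrix}U^v(0)\\U^v(-h)\end{bmatrix}$ both factor through $\mathcal M$, with scalar weights $-k$ and $-(-h)^k$. Hence the block system reads $\mathcal K\begin{bmatrix}G_k^v\\\bar G_k^v\end{bmatrix}=-\mathcal M\bigl(k\begin{bmatrix}G_{k-1}^v\\\bar G_{k-1}^v\end{bmatrix}+(-h)^k\begin{bmatrix}U^v(0)\\U^v(-h)\end{bmatrix}\bigr)$. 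Writing $\mathcal K=\mathcal M L$ and left-multiplying by $\mathcal M^{-1}$ (which exists, since $L$ is defined through $\mathcal M^{-1}$; equivalently $\|D\|<1$ makes the off-diagonal blocks of $\mathcal M$ contractive) cancels $\mathcal M$ and turns $\mathcal K$ into $L$; a final multiplication by $L^{-1}$, available because $\det(L)\neq0$, isolates $\begin{bmatrix}G_k^v\\\bar G_k^v\end{bmatrix}$ and gives precisely \eqref{eq:Gk}.

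Finally, the base case $k=0$ must be handled separately: integration by parts then produces no $G_{k-1}$ contribution and the boundary terms regroup into $U(h)-U(0)$ and $U(0)-U(-h)$. The same vectorisation shows $\mathcal K\begin{bmatrix}G_0^v\\\bar G_0^v\end{bmatrix}=\mathcal M\begin{bmatrix}U^v(h)-U^v(0)\\U^v(0)-U^v(-h)\end{bmatrix}$, from which \eqref{eq:G0} follows after cancelling $\mathcal M$ and inverting $L$. I expect the main obstacle to be purely in the bookkeeping: correctly matching the vectorised coefficient blocks to $\mathcal M$ and $\mathcal K$ under the paper's nonstandard Kronecker convention, and tracking the signs of the boundary terms; once this identification is secured, the collapse $\mathcal K=\mathcal M L$ performs the rest automatically.
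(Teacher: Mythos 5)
Your proposal is correct and follows essentially the same route as the paper's proof: integrate the dynamic property \eqref{eq:dynamic_property} (and its negative-argument counterpart \eqref{eq:dynamic_pro_tau_neg}) against $\theta^k$, integrate by parts to expose $G_{k-1},\bar G_{k-1}$ and the boundary values, vectorise via $\textup{vec}(AXB)=(A\otimes B)\textup{vec}(X)$, and solve the resulting $2\times 2$ block system by recognising $L=\mathcal M^{-1}\mathcal K$ and invoking $\det(L)\neq 0$. The only cosmetic differences are the order of treatment (the paper does $k=0$ first) and your added remark justifying the invertibility of $\mathcal M$, which the paper leaves implicit in the definition of $L$.
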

\begin{proof}
     Integrating the dynamic property~\eqref{eq:dynamic_property} yields
    \begin{equation*}
        \Int_{-h}^0 U'(h+\theta)\dd \theta - \Int_{-h}^0 U'(\theta)\dd \theta D= \Int_{-h}^0 U(h+\theta) \dd \theta A_0 + \Int_{-h}^0 U(\theta) \dd \theta A_1.   
    \end{equation*}
    We also note that
    \begin{equation*}
        \int_{-h}^0U(h+\theta)\dd\theta=G_0,\quad \Bar{G}_0=\Int_{-h}^0 U(\theta) \dd \theta = \Int_{-h}^0 U^{\top}(-\theta) \dd \theta = \Int_0^{h} U^{\top}(\theta) \dd \theta = \Int_{-h}^0 U^{\top}(h+\theta) \dd \theta = G_0^\top.
    \end{equation*}
    Hence, we have
    \begin{equation*}
        U(h)-U(0)-U(0)D+U(-h)D = G_0 A_0 + G_0^\top A_1.
    \end{equation*}
    Notice that this equation can be rewritten as
    \begin{equation*}
        U(0)-U(-h)-D^\top U(h)D+D^\top U(0) = -A_0^\top G_0^\top  - A_1^\top G_0 .
    \end{equation*}    
    By vectorization based on Kronecker product properties \cite{roger1994topics}, we arrive at the following system of linear algebraic equations
    \begin{equation*}
        \begin{bmatrix}
          I\otimes A_0 & I\otimes A_1\\
          - A_1^{\top}\otimes I & -A_0^{\top}\otimes I
        \end{bmatrix}\begin{bmatrix}
         G_0^v\\ \bar{G}_0^v
        \end{bmatrix} = \begin{bmatrix}
          I\otimes I & -I\otimes D\\
          -D^{\top}\otimes I & I\otimes I
        \end{bmatrix}\begin{bmatrix}
          U^v(h)-U^v(0)\\
          U^v(0)-U^v(-h)
        \end{bmatrix}.
    \end{equation*}
    Since $\mathrm{det}(L)\neq 0$, we arrive at a unique solution of this system given by~\eqref{eq:G0}.\newline
    Following the same idea, the dynamic property~\eqref{eq:dynamic_property} is multiplied by $\theta^k$ and integrated to obtain
    \begin{equation*}
    \Int_{-h}^0 U'(h+\theta)\theta^k \dd \theta - \Int_{-h}^0 U'(\theta)\theta^k \dd \theta D = \Int_{-h}^0 U(h+\theta)\theta^k\dd \theta A_0 + \Int_{-h}^0 U(\theta)\theta^k \dd \theta A_1.
    \end{equation*}
    Applying integration by parts leads to
    \begin{equation*}
    -kG_{k-1} - (-h)^kU(0) + k\Bar{G}_{k-1} D + (-h)^kU(-h)D = G_k A_0 + \Bar{G}_k A_1.
    \end{equation*}
    Similar steps to the dynamic property \eqref{eq:dynamic_pro_tau_neg} give the two following algebraic equations rewritten in vectorized form
    \begin{equation*}
    \begin{bmatrix}I\otimes A_0 & I\otimes A_1\\- A_1^{\top}\otimes I & -A_0^{\top}\otimes I\end{bmatrix} \begin{bmatrix}G_k^v\\\Bar{G}_k^v\end{bmatrix} = - \begin{bmatrix}I\otimes I & -I\otimes D\\-D^{\top}\otimes I & I\otimes I \end{bmatrix} \left(k\begin{bmatrix}G_{k-1}^v\\\Bar{G}_{k-1}^v\end{bmatrix} + (-h)^k \begin{bmatrix}U^v(0)\\U^v(-h)\end{bmatrix}\right).
    \end{equation*}
    equivalently,
    \begin{equation*}
    L \begin{bmatrix}G_k^v\\\Bar{G}_k^v\end{bmatrix} = -  \left(k\begin{bmatrix}G_{k-1}^v\\\Bar{G}_{k-1}^v\end{bmatrix} + (-h)^k \begin{bmatrix}U^v(0)\\U^v(-h)\end{bmatrix}\right).
    \end{equation*}
    When $\mathrm{det}(L)\neq 0$, we conclude that the unique solution is given by~\eqref{eq:Gk}.
\end{proof}

\begin{proposition}\label{pro:H}
    If the Lyapunov condition holds and $\mathrm{det}(L)\neq 0$, then the set of matrices $\{H_{ij}\}_{i,j=\overline{0,N-1}}$ can be computed as
    \begin{equation}
        \begin{bmatrix} H_{ij}^v\\\Bar{H}_{ij}^v
        \end{bmatrix} = L^{-1}\begin{pmatrix}
      \frac{(-h)^{i+j+1}}{i+j+1}U^v(0)+(-h)^{j}G_i^v+jH_{i,j-1}^v\\
       \frac{(-h)^{i+j+1}}{i+j+1}U^v(-h)+(-h)^j\Bar{G}_i^v+j\Bar{H}_{i,j-1}^v
    \end{pmatrix}.
     \label{eq:Hk}
    \end{equation}
\end{proposition}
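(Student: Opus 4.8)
The plan is to mirror the proof of Proposition~\ref{pro:G}, now carried out on the triangular region $\{(\theta_1,\theta_2): -h\<\theta_2\<\theta_1\<0\}$ over which $H_{ij}$ and $\bar H_{ij}$ are defined. First I would note that on this region $\theta_1-\theta_2\in[0,h]$, so the dynamic property~\eqref{eq:dynamic_property} evaluated at the argument $\theta_1-\theta_2$ holds almost everywhere. Multiplying~\eqref{eq:dynamic_property} by $\theta_1^i\theta_2^j$ and integrating over the triangle turns its right-hand side into $H_{ij}A_0+\bar H_{ij}A_1$. Since $H_{ij}$ and $\bar H_{ij}$ are two unknowns, a single relation is not enough; as in Proposition~\ref{pro:G}, I would obtain a second relation by repeating the operation on the negative-argument property~\eqref{eq:dynamic_pro_tau_neg} evaluated at $\theta_1-\theta_2-h\in[-h,0]$, whose right-hand side integrates to $-A_0^\top\bar H_{ij}-A_1^\top H_{ij}$.

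The key step is the evaluation of the left-hand side integrals, each of the form $\int_{-h}^0\!\int_{-h}^{\theta_1}\theta_1^i\theta_2^j\,U'(\theta_1-\theta_2)\,\dd\theta_2\dd\theta_1$ and its shift with $U'(\theta_1-\theta_2-h)$. Here I would integrate by parts in the inner variable $\theta_2$, using $\tfrac{\partial}{\partial\theta_2}U(\theta_1-\theta_2)=-U'(\theta_1-\theta_2)$. The boundary contribution at $\theta_2=\theta_1$ produces $U(0)$ (resp. $U(-h)$ in the shifted integral), which after integrating $\theta_1^{i+j}$ over $[-h,0]$ becomes $\tfrac{(-h)^{i+j+1}}{i+j+1}U(0)$ (resp. with $U(-h)$); the boundary contribution at $\theta_2=-h$ produces $(-h)^jU(\theta_1+h)$ (resp. $(-h)^jU(\theta_1)$), which integrates against $\theta_1^i$ into $(-h)^jG_i$ (resp. $(-h)^j\bar G_i$), by the definitions of $G_i,\bar G_i$; and the surviving integral lowers the power of $\theta_2$ by one, giving $jH_{i,j-1}$ (resp. $j\bar H_{i,j-1}$). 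Thus each left-hand integral equals exactly one of the two blocks
\begin{equation*}
\alpha^{(1)}=\tfrac{(-h)^{i+j+1}}{i+j+1}U(0)+(-h)^jG_i+jH_{i,j-1},\qquad \alpha^{(2)}=\tfrac{(-h)^{i+j+1}}{i+j+1}U(-h)+(-h)^j\bar G_i+j\bar H_{i,j-1},
\end{equation*}
that appear inside $L^{-1}$ in~\eqref{eq:Hk}. Note that the $G_i,\bar G_i$ are already available from Proposition~\ref{pro:G} and that the recursion in $j$ initializes cleanly at $j=0$, where the term $jH_{i,j-1}$ simply vanishes.

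The two matrix relations then read $\alpha^{(1)}-\alpha^{(2)}D=H_{ij}A_0+\bar H_{ij}A_1$ and $\alpha^{(2)}-D^\top\alpha^{(1)}=-A_0^\top\bar H_{ij}-A_1^\top H_{ij}$. Vectorizing both through $\textup{vec}(AXB)=(A\otimes B)\textup{vec}(X)$ and stacking them, the coefficient matrix acting on the pair $(H_{ij}^v,\bar H_{ij}^v)$ is precisely the numerator block matrix in the definition of $L$, whereas the stacked right-hand side factors as the (invertible) denominator block matrix of $L$ applied to the pair $(\textup{vec}(\alpha^{(1)}),\textup{vec}(\alpha^{(2)}))$. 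Left-multiplying by the inverse of that denominator block reconstructs $L$ on the left, yielding a system of the form $L\,(H_{ij}^v,\bar H_{ij}^v)^\top=(\textup{vec}(\alpha^{(1)}),\textup{vec}(\alpha^{(2)}))^\top$, whose unique solution under $\det(L)\neq0$ is exactly~\eqref{eq:Hk}. The main obstacle I anticipate is purely bookkeeping: correctly tracking the four boundary evaluations $U(0),U(-h),U(\theta_1+h),U(\theta_1)$ and matching each to its constant, $G_i/\bar G_i$, or $H_{i,j-1}/\bar H_{i,j-1}$ term, and making sure the shifted integral is combined with~\eqref{eq:dynamic_pro_tau_neg} (rather than~\eqref{eq:dynamic_property}) so that the two equations assemble into exactly the block structure defining $L$. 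The Lyapunov condition ensures that $U$ exists, is continuous and bounded, and is smooth on $(-h,0)$ and $(0,h)$, so all integrations by parts and the final inversion are justified.
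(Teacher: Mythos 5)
Your proposal is correct and takes essentially the same route as the paper's own proof: like the paper, you multiply the dynamic property \eqref{eq:dynamic_property} and its negative-argument counterpart \eqref{eq:dynamic_pro_tau_neg} by $\theta_1^i\theta_2^j$, integrate over the triangle with integration by parts in $\theta_2$ to produce exactly the blocks $\tfrac{(-h)^{i+j+1}}{i+j+1}U(0)+(-h)^jG_i+jH_{i,j-1}$ and $\tfrac{(-h)^{i+j+1}}{i+j+1}U(-h)+(-h)^j\bar G_i+j\bar H_{i,j-1}$, and then vectorize via $\textup{vec}(AXB)=(A\otimes B)\textup{vec}(X)$ and invert the resulting block system under $\det(L)\neq 0$. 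Your bookkeeping (including the vanishing of $jH_{i,j-1}$ at $j=0$, and writing $U(-h)$ where the paper writes the equal quantity $U^\top(h)$) matches the paper's computation step for step.
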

\begin{proof}
    Repeating the steps used in the proof of Proposition~\ref{pro:G}, the dynamic property~\eqref{eq:dynamic_property} for $\theta_1-\theta_2> 0$ gives
    \begin{equation*}
    -\frac{\partial U(\theta_1-\theta_2)}{\partial \theta_2}+\frac{\partial U(\theta_1-\theta_2-h)}{\partial \theta_2} D =  U(\theta_1-\theta_2) A_0 +  U(\theta_1-\theta_2-h) A_1 .   
    \end{equation*}
    Now, pre and post multiply the discretized dynamic property \eqref{eq:dynamic_property} by $\theta_1^i$ and $\theta_2^j$, respectively: 
\begin{equation*}
    -\theta_1^i\theta_2^j\frac{\partial U(\theta_1-\theta_2)}{\partial \theta_2} +\theta_1^i\theta_2^j\frac{\partial U(\theta_1-\theta_2-h)}{\partial \theta_2}D=  \theta_1^i\theta_2^jU(\theta_1-\theta_2) A_0 +  \theta_1^i\theta_2^jU(\theta_1-\theta_2-h) A_1 .   
\end{equation*}
Then, we integrate with respect to $\theta_2$ the previous expression, from $-h$ up to $\theta_1$ in such a way that the difference $\theta_1-\theta_2>0$. It implies that
\begin{multline*}
    -\Int_{-h}^{\theta_1}\theta_1^i
    \theta_2^j\frac{\partial U(\theta_1-\theta_2)}{\partial \theta_2}\dd \theta_2 + \Int_{-h}^{\theta_1}\theta_1^i\theta_2^j\frac{\partial U(\theta_1-\theta_2-h)}{\partial \theta_2}\dd \theta_2 D=\\  \Int_{-h}^{\theta_1}\theta_1^i\theta_2^jU(\theta_1-\theta_2)\dd \theta_2 A_0 +  \Int_{-h}^{\theta_1}\theta_1^i\theta_2^jU(\theta_1-\theta_2-h)\dd \theta_2 A_1 .   
\end{multline*}
Compute the integrals in the above expression:
\begin{equation*}
    \Int_{-h}^{\theta_1}\theta_1^i\theta_2^j\frac{\partial U(\theta_1-\theta_2)}{\partial \theta_2}\dd \theta_2= \theta_1^{i+j}U(0)-(-h)^{j}\theta_1^iU(h+\theta_1) - \Int_{-h}^{\theta_1} j\theta_1^i\theta_2^{j-1}U(\theta_1-\theta_2)\dd \theta_2,
\end{equation*}
\begin{equation*}
    \Int_{-h}^{\theta_1}\theta_1^i\theta_2^j\frac{\partial U(\theta_1-\theta_2-h)}{\partial \theta_2}\dd \theta_2 =  \theta_1^{i+j}U^{\top}(h)-(-h)^j\theta_1^iU(\theta_1) - \Int_{-h}^{\theta_1} j\theta_1^i\theta_2^{j-1}U(\theta_1-\theta_2-h) \dd \theta_2.
\end{equation*}
noting that the last term vanishes in the case $j=0$. Thus, 
\begin{multline*}
    -\theta_1^{i+j}U(0)+(-h)^{j}\theta_1^iU(h+\theta_1) + \Int_{-h}^{\theta_1} j\theta_1^i\theta_2^{j-1}U(\theta_1-\theta_2)\dd \theta_2 +\theta_1^{i+j}U^{\top}(h)D-(-h)^j\theta_1^iU(\theta_1)D\\ - \Int_{-h}^{\theta_1} j\theta_1^i\theta_2^{j-1}U(\theta_1-\theta_2-h) \dd \theta_2D=  \Int_{-h}^{\theta_1}\theta_1^i\theta_2^jU(\theta_1-\theta_2)\dd \theta_2 A_0 +  \Int_{-h}^{\theta_1}\theta_1^i\theta_2^jU(\theta_1-\theta_2-h)\dd \theta_2 A_1 .
\end{multline*}
Further, integrate with respect to $\theta_1$
\begin{multline*}
    -\Int_{-h}^0 \theta_1^{i+j}U(0)\dd \theta_1+\Int_{-h}^0(-h)^{j}\theta_1^iU(h+\theta_1)\dd \theta_1 + \Int_{-h}^0\Int_{-h}^{\theta_1} j\theta_1^i\theta_2^{j-1}U(\theta_1-\theta_2)\dd \theta_2\dd \theta_1\\ +\Int_{-h}^0\theta_1^{i+j}U^{\top}(h)D\dd \theta_1-\Int_{-h}^0(-h)^j\theta_1^iU(\theta_1)\dd \theta_1D - \Int_{-h}^0\Int_{-h}^{\theta_1} j\theta_1^i\theta_2^{j-1}U(\theta_1-\theta_2-h) \dd \theta_2\dd \theta_1D=  \\\Int_{-h}^0\Int_{-h}^{\theta_1}\theta_1^i\theta_2^jU(\theta_1-\theta_2)\dd \theta_2\dd \theta_1 A_0 +  \Int_{-h}^0\Int_{-h}^{\theta_1}\theta_1^i\theta_2^jU(\theta_1-\theta_2-h)\dd \theta_2\dd \theta_1 A_1 ,
\end{multline*}
equivalently, 
\begin{multline*}
    \frac{(-h)^{i+j+1}}{i+j+1}(U(0)-U^{\top}(h)D) + (-h)^{j}G_i +jH_{i,j-1} -(-h)^j\Bar{G}_iD-j\Bar{H}_{i,j-1}D  = H_{ij}A_0 + \Bar{H}_{ij}A_1.
\end{multline*}
Using similar steps for the dynamic property \eqref{eq:dynamic_pro_tau_neg}, we arrive at the two following algebraic equations 
\begin{multline*}
    \frac{(-h)^{i+j+1}}{i+j+1}(U(0)-U^{\top}(h)D) + (-h)^{j}G_i +jH_{i,j-1} -(-h)^j\Bar{G}_iD-j\Bar{H}_{i,j-1}D  = H_{ij}A_0 + \Bar{H}_{ij}A_1,
\end{multline*}
\begin{multline*}
    \frac{(-h)^{i+j+1}}{i+j+1}(U^{\top}(h)-D^{\top}U(0)) + (-h)^{j}\Bar{G}_i+j\Bar{H}_{i,j-1} -D^{\top}(-h)^{j}\Bar{G}_i-D^{\top}jH_{i,j-1}= - A_1^{\top}H_{ij} - A_0^{\top}\Bar{H}_{ij}.
\end{multline*}
By vectorization, the following system of linear algebraic equations is obtained
\begin{equation*}
    \begin{bmatrix}
      I\otimes A_0 & I\otimes A_1\\
      - A_1^{\top}\otimes I & -A_0^{\top}\otimes I
    \end{bmatrix}\begin{bmatrix} H_{ij}^v\\\Bar{H}_{ij}^v
        \end{bmatrix} =\begin{bmatrix}
      I\otimes I & -I\otimes D\\
      -D^{\top}\otimes I & I\otimes I
    \end{bmatrix}\begin{bmatrix}
      \frac{(-h)^{i+j+1}}{i+j+1}U(0)+(-h)^{j}G_i^v+jH_{i,j-1}^v\\
       \frac{(-h)^{i+j+1}}{i+j+1}U^{v}(-h)+(-h)^j\Bar{G}_i^v+j\Bar{H}_{i,j-1}^v
    \end{bmatrix}.
\end{equation*}
 When $\mathrm{det}(L)\neq 0$, we conclude that the unique solution is given by~\eqref{eq:Hk}.
\end{proof}
\begin{remark}
    Matrices $(G_k,\Bar{G}_k,H_{ij},\Bar{H}_{ij})$ are finally obtained by devectorization of $(G_k^v,\Bar{G}_k^v,H_{ij}^v,\Bar{H}_{ij}^v)$.
\end{remark}

With the help of Proposition~\ref{pro:G} and \ref{pro:H}, we now give an explicit expression for the integral entries of matrix $\mathbf{P}_{N}$. Thus, the integrals involving the delay Lyapunov matrix and its derivatives are given by
\begin{align*}
    J_{0N} &= U(0),\\
    J_{1N} &= \begin{bmatrix} G_0^\top A_1&\cdots&G_{N-1}^\top A_1\end{bmatrix},\\
    J_{2N} &= -\begin{bmatrix} -U^{\top}(h) & G_0^{\top} &\!\cdots\!&(N-1)G_{N-2}^{\top}\end{bmatrix}D -\begin{bmatrix} 1&\!\cdots\!&(-h)^{N-1}\end{bmatrix}U(0)D,\\
    J_{3N} &=\mathrm{He}\left(\left\{A_1^\top H_{ij}A_1\right\}_{i,j=0}^{N-1}\right),\\
    J_{4N} &=\mathrm{He}\left(\left\{A_1^\top \begin{pmatrix}\frac{(-h)^{i+j+1}}{i+j+1}U(0)+(-h)^jG_i+jH_{i,j-1}\end{pmatrix}D\right\}_{i,j=0}^{N-1}\right),\\
    J_{5N}\!+\!J_{6N} &= \mathrm{He}\left(\left\{D^\top\begin{pmatrix}-(-h)^{i+j}U(0)-i(-h)^jG_{i-1}\end{pmatrix}D\right\}_{i,j=0}^{N-1}\right)\\
    &+\mathrm{He}\left(\left\{D^\top j \begin{pmatrix}\frac{(-h)^{i+j}}{i+j}U(0)+(-h)^{j-1}G_i+(j-1)H_{i,j-2}\end{pmatrix}D\right\}_{i,j=0}^{N-1}\right).
\end{align*}

\section{Illustrative examples}\label{sec:examples}

Two examples validate the condition of Corollary~\ref{cor:nec_and_suf}. The delay Lyapunov matrix involved in each criterion is computed via the semi-analytic method introduced in \cite{kharitonov2013time} for $W=I_n$. The positive semi-definiteness of $\mathbf{P}_{N}$ is verified by using the function ‘‘cholcov" in Matlab as suggested in \cite{irina2022criterion}. In each example, the stability boundaries obtained by the D-subdivision are depicted by a solid line.  For comparison purposes, we present the order of approximations $\hat{r}$ and $N_0$ introduced in \cite{gomez2021necessary} and \cite{portilla2023}, respectively.

\textit{Example 1:} Consider the scalar neutral type system
\begin{equation*} \label{eq:time-delay_sys_scalar}
    \frac{\mathrm{d}}{\mathrm{d}t}[x(t)-dx(t-h)]=a_0 x(t)+a_1 x(t-h).
\end{equation*}
The map of tractable approximation orders for the stable region of the space of parameter $(a_0,a_1)$, shown in Figure~\ref{ex1} where the color code is related to the order $N^\star$ of $\mathbf{P}_{N^\star}$, illustrates the stability condition of Corollary~\ref{cor:nec_and_suf}. 
\begin{figure}[!t]
     \centering
      \includegraphics[width=0.7\textwidth]{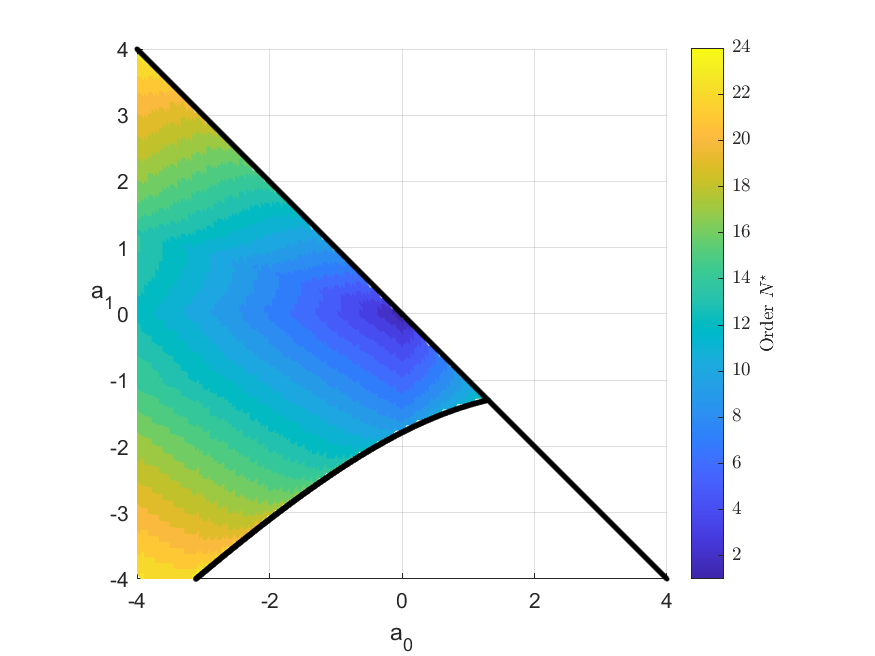}
      \caption{Map of the order $N^{\star}$ in the space of parameter $(a_0,a_1)$.}
      \label{ex1}
\end{figure}
For implementing the recursive method described in Section~\ref{sec:rec_method}, the bits of precision in Matlab play an important role. To show this, the non-negativity of $\mathbf{P}_{N}$ is verified by computing its minimum eigenvalue. In Table~\ref{table:1}, the minimum eigenvalue is computed for $8,~16$ and $32$ bits and parameter values $(a_0,a_1)=(0.8,-1.2)$, $d=-0.3$ and dimension $N=20$. It is worth mentioning that, see Figure~\ref{ex1}, system \eqref{eq:time-delay_sys_scalar} is exponentially stable for these parameters. Notice that the validation of the non-negativity of $\mathbf{P}_{N}$ fails for $8$ and $16$ bits. Thus, the bits of precision must be increased to implement the recursive method, impacting the execution time for the validation of the stability condition $\mathbf{P}_{N}$.

\begin{table}[!t]
\begin{center}
\caption{Minimum eigenvalue of $\mathbf{P}_{20}$ with respect to the precision}\label{table:1}
\begin{tabular}{ cccc } 
 \hline
  Bits of precision & $8$ bits & $16$ bits & $32$ bits\\ 
 \hline 
  $\lambda_{\min}(\mathbf{P}_{20})$ & $-4.8\time 10^{23}$ & $-2.8\time 10^{-19}$ & $1.6\time 10^{-26}$\\ 
 \hline
 \end{tabular}
\end{center}
\end{table}

\textit{Example 2:} The $\sigma$-stability analysis of the proportional-integral control of a passive linear system leads to studying a quasipolynomial of neutral type \cite{castanos2018passivity}. Its time domain representation is of the form \eqref{eq:time-delay_sys}, with matrices $D=
\begin{pmatrix}
0 & 0\\
0 & -\frac{\alpha_2}{\alpha_1}
\end{pmatrix}$,
\begin{equation*}
A_0=\frac{1}{\alpha_1}
\begin{pmatrix}
0 & \alpha_1\\
-\sigma^2\alpha_1+\sigma \beta_1 -\gamma_1 & -\beta_1+2\sigma \alpha_1 
\end{pmatrix},
\end{equation*}

\begin{equation*}
A_1=\frac{1}{\alpha_1}
\begin{pmatrix}
0 & 0\\
-\sigma^2\alpha_2+\sigma \beta_2 -\gamma_2 & -\beta_2+2\sigma \alpha_2 
\end{pmatrix},
\end{equation*}
where 
\begin{equation*}
    \begin{split}
        \alpha_1&= d+k_p,\quad \gamma_1=bk_id+ak_i,\\
        \alpha_2&= (d-k_p)e^{\sigma h},\quad \gamma_2=(bk_id-ak_i)e^{\sigma h},\\
        \beta_1&= (bk_p+a)d + bd^2 + ak_p+k_i,\\
        \beta_2&= ((bk_p+a)d-bd^2-ak_p-k_i)e^{\sigma h}.
    \end{split}
\end{equation*}

 The stability of the difference operator imposes in the D-subdivision map the additional condition $|k_p|<26.67$. For parameter values $a=0.4,~b=50,~h=0.2,~d=0.8$ and $\sigma=0.3$, the map of the approximation order $N^{\star}$ for the space of parameter $(k_p,k_i)$ is depicted in Figure~\ref{ex22}.

\begin{figure}[!t]   
     \centering
      \includegraphics[width=0.7\textwidth]{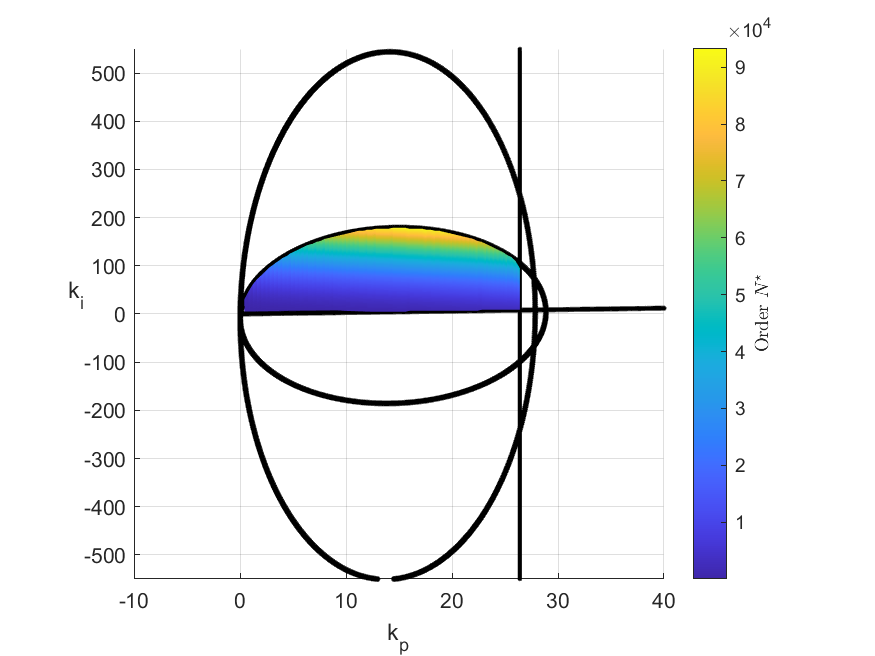}
      \caption{Map of the order $N^{\star}$ in the space of parameter $(k_p,k_i)$. }
      \label{ex22}
\end{figure}

The order of approximation $N^\star$ in Corollary~\ref{cor:nec_and_suf} is computed and compared with the results obtained in \cite{gomez2021necessary} and \cite{portilla2023} for two different points of the space of parameters. Table~\ref{table:2} shows that the value $N^\star$ is significantly smaller compared to $N_0$ and $\hat{r}$, where $(\textbf{---})$ indicates that the result is not verifiable computationally due to it exceeds the computer RAM. The favorable outcome of Corollary~\ref{cor:nec_and_suf} stems from using the Chebyshev polynomials instead of functions based on the fundamental matrix \cite{gomez2021necessary} and piece-wise linear approximation \cite{portilla2023}, resulting in a faster functional argument approximation convergence rate. Thus, we can conclude the stability of neutral-type systems in a small order of approximations.
\begin{table}[!t]
\begin{center}
\caption{Stability criteria given by Corollary~\ref{cor:nec_and_suf}, Theorem 6 in \cite{portilla2023} and Theorem 4 in \cite{gomez2021necessary}}\label{table:2}
\begin{tabular}{ ccccccc }
 \hline
 $(k_p,k_i)$ & $N^{\star}$ & Result & $N_0$ \cite{portilla2023} & Result & $\hat{r}$ \cite{gomez2021necessary} & Result\\ 
 \hline
 $(1,1)$ & $26$ & Stable & $65$ & Stable & $5\times 10^{22}$ & \textbf{---} \\ 
 $(1,-1)$ & $31$ & Unstable & $89$ & Unstable & $3\times 10^{26}$ & \textbf{---} \\ 
 \hline
 \end{tabular}
\end{center}
\end{table}

\section{Conclusions}\label{sec:conclusion}
We present necessary stability conditions for neutral time-delay systems in terms of a positive semi-definite test based on any polynomial approximation of the functional argument. The condition takes a quadratic form independent of the approximation coefficients. Its dimension depends on the polynomial approximation method. The sufficiency of the result is proven for the particular case of Chebyshev polynomial approximation. An estimation of a dimension for which the positive semi-definiteness test holds is given. Two examples show the benefits of using Chebyshev in terms of computational complexity. 

\section*{CRediT authorship contribution statement}
\textbf{Gerson Portilla:}  Methodology, Software, Validation, Writing - Review \& Editing. \textbf{Mathieu Bajodek:} Conceptualization, Methodology, Writing - Review \& Editing preparation. \textbf{Sabine Mondié:} Conceptualization, Supervision, Writing - Review \& Editing. 

\bibliography{bibliography}

\end{document}